\documentclass[a4paper, USenglish, cleveref, autoref, thm-restate]{lipics-v2021}

\pdfoutput=1
\hideLIPIcs

\bibliographystyle{plainurl}

\title{Approximating Single-Source Personalized PageRank with Absolute Error Guarantees\footnote{This text is the full version of a paper that has been accepted for publication at the 27th International Conference on Database Theory (ICDT 2024).}}

\titlerunning{Approximating Single-Source PPR with Absolute Error Guarantees}

\author{Zhewei Wei\footnote{Zhewei Wei is the corresponding author.}}{Renmin University of China, Beijing, China}{zhewei@ruc.edu.cn}{https://orcid.org/0000-0003-3620-5086}{}

\author{Ji-Rong Wen}{Renmin University of China, Beijing, China}{jrwen@ruc.edu.cn}{https://orcid.org/0000-0002-9777-9676}{}

\author{Mingji Yang}{Renmin University of China, Beijing, China}{kyleyoung@ruc.edu.cn}{https://orcid.org/0000-0002-7748-2138}{}

\authorrunning{Z. Wei, J. Wen, and M. Yang}

\Copyright{Zhewei Wei, Ji-Rong Wen, and Mingji Yang}

\ccsdesc[500]{Theory of computation~Graph algorithms analysis}
\ccsdesc[500]{Theory of computation~Streaming, sublinear and near linear time algorithms}

\keywords{Graph Algorithms, Sublinear Algorithms, Personalized PageRank}

\category{}

\funding{This work was partially done at Gaoling School of Artificial Intelligence, Peng Cheng Laboratory, Beijing Key Laboratory of Big Data Management and Analysis Methods, and MOE Key Lab of Data Engineering and Knowledge Engineering.}

\acknowledgements{This research was supported in part by National Natural Science Foundation of China (No. U2241212, No. 61972401, No. 61932001, No. 61832017, No. U2001212), the major key project of PCL (PCL2021A12), Beijing Natural Science Foundation (No. 4222028), Beijing Outstanding Young Scientist Program No.BJJWZYJH012019100020098, Alibaba Group through Alibaba Innovative Research Program, and Huawei-Renmin University joint program on Information Retrieval. We also wish to acknowledge the support provided by Engineering Research Center of Next-Generation Intelligent Search and Recommendation, Ministry of Education.}

\nolinenumbers

\usepackage{dsfont}
\usepackage{xspace}
\usepackage{booktabs}
\usepackage{colortbl}
\usepackage{makecell}
\usepackage{cellspace}
\usepackage[ruled, vlined, linesnumbered]{algorithm2e}

\newcommand{\RomanNumeralCaps}[1]{\MakeUppercase{\romannumeral #1}}

\def\a{\alpha}
\def\d{\mathrm{d}}
\def\eps{\varepsilon}
\def\tO{\widetilde{O}}

\DeclareMathOperator{\E}{E}
\DeclareMathOperator{\Var}{Var}
\DeclareMathOperator{\median}{median}
\DeclareMathOperator{\polylog}{polylog}

\def\Nin{\mathcal{N}_{\mathrm{in}}}
\def\Nout{\mathcal{N}_{\mathrm{out}}}
\def\din{d_{\mathrm{in}}}
\def\dout{d_{\mathrm{out}}}
\def\dmax{d_{\max}}

\def\epi{\hat{\pi}}

\def\epib{q}

\def\rb{r}

\def\rmaxf{r_{\mathrm{max}}}
\def\rmaxb{r_{\mathrm{max}}}

\newcommand{\indicator}[1]{\mathds{1}\left\{#1\right\}}

\def\naive{na{\"{i}}ve\xspace}

\def\FORA{\texttt{FORA}\xspace}
\def\SpeedPPR{\texttt{SpeedPPR}\xspace}
\def\RBS{\texttt{RBS}\xspace}
\def\BEAR{\texttt{BEAR}\xspace}
\def\BEPI{\texttt{BEPI}\xspace}

\def\SSPPRA{SSPPR-A\xspace}
\def\SSPPRD{SSPPR-D\xspace}

\def\epsd{\eps_{d}}
\def\epsr{\eps_{r}}

\def\pipp{\pi''}

\def\pf{p_f}

\newtheorem{assumption}[theorem]{Assumption}

\begin{document}

\maketitle

\begin{abstract}
\textit{Personalized PageRank (PPR)} is an extensively studied and applied node proximity measure in graphs.
For a pair of nodes $s$ and $t$ on a graph $G=(V,E)$, the PPR value $\pi(s,t)$ is defined as the probability that an $\alpha$-discounted random walk from $s$ terminates at $t$, where the walk terminates with probability $\alpha$ at each step.
We study the classic \textit{Single-Source PPR query}, which asks for PPR approximations from a given source node $s$ to all nodes in the graph.
Specifically, we aim to provide approximations with \textit{absolute error} guarantees, ensuring that the resultant PPR estimates $\epi(s,t)$ satisfy $\max_{t\in V}\big|\epi(s,t)-\pi(s,t)\big|\le\eps$ for a given error bound $\eps$.
We propose an algorithm that achieves this with high probability, with an expected running time of
\begin{itemize}
\item $\tO\big(\sqrt{m}/\eps\big)$ for directed graphs\footnote{$\tO(\cdot)$ suppresses $\polylog(n)$ factors.}, where $m=|E|$;
\item $\tO\big(\sqrt{\dmax}/\eps\big)$ for undirected graphs, where $\dmax$ is the maximum node degree in the graph;
\item $\tO\left(n^{\gamma-1/2}/\eps\right)$ for power-law graphs, where $n=|V|$ and $\gamma\in\left(\frac{1}{2},1\right)$ is the extent of the power law.
\end{itemize}
These sublinear bounds improve upon existing results.
We also study the case when \textit{degree-normalized absolute error} guarantees are desired, requiring $\max_{t\in V}\big|\epi(s,t)/d(t)-\pi(s,t)/d(t)\big|\le\epsd$ for a given error bound $\epsd$, where the graph is undirected and $d(t)$ is the degree of node $t$.
We give an algorithm that provides this error guarantee with high probability, achieving an expected complexity of $\tO\left(\sqrt{\sum_{t\in V}\pi(s,t)/d(t)}\big/\epsd\right)$.
This improves over the previously known $O(1/\epsd)$ complexity.
\end{abstract}

\section{Introduction}

In graph mining, computing \textit{node proximity} values efficiently is a fundamental problem with broad applications, as they provide quantitative amounts to measure the closeness or relatedness between the nodes.
A basic and extensively used proximity measure is \textit{Personalized PageRank (PPR)}~\cite{brin1998anatomy}, which is a direct variant of Google's renowned \textit{PageRank} centrality~\cite{brin1998anatomy}.
PPR has found multifaced applications for local graph partitioning~\cite{andersen2006local,yin2017local,fountoulakis2019variational}, node embedding~\cite{ou2016asymmetric,tsitsulin2018verse,yin2019scalable}, and graph neural networks~\cite{klicpera2018predict,bojchevski2020scaling,wang2021approximate}, among many others~\cite{gleich2015pagerank}.

We study the classic problem of approximating \textit{Single-Source PPR (SSPPR)}, where we are given a source node $s$ in the graph, and our goal is to approximate the PPR values of all nodes in the graph w.r.t. $s$.
Particularly, we concentrate on the complexity bounds for approximating SSPPR with \textit{absolute error} or \textit{degree-normalized absolute error} guarantees.
After examining the existing bounds for the problem, we present novel algorithms with improved complexities to narrow the margin between the previous upper bounds and the known lower bounds.

In the remainder of this section, we formally state the problem, discuss the existing bounds, and introduce our motivations and contributions.

\subsection{Problem Formulation}

We consider a directed or undirected graph $G=(V,E)$, where $|V|=n$ and $|E|=m$.
For undirected graphs, we conceptually view each undirected edge as two opposing directed edges.
We assume that every node in $V$ has a nonzero out-degree.

A \textit{random walk} on $G$ starts from some \textit{source node} $s\in V$ and, at each step, transitions to an out-neighbor of the current node chosen uniformly at random.
For a constant \textit{decay factor} $\alpha\in(0,1)$, an \textit{$\alpha$-discounted random walk} proceeds in the same way as a random walk, except that it terminates with probability $\alpha$ before each step.
The \textit{Personalized PageRank (PPR)} value for a pair of nodes $s$ and $t$ in $V$, denoted by $\pi(s,t)$, is defined as the probability that an $\alpha$-discounted random walk from $s$ terminates at $t$.

We study the problem of estimating \textit{Single-Source PPR (SSPPR)}, that is, deriving $\epi(s,t)$ as estimates for PPR values $\pi(s,t)$ from a given source node $s$ to all $t\in V$.
We focus on the complexities of approximating SSPPR with \textit{absolute error} or \textit{degree-normalized absolute error} guarantees.
The corresponding two types of queries, dubbed as the \textit{\SSPPRA query} and the \textit{\SSPPRD query}, are formally defined below.

\begin{definition}[\SSPPRA Query: Approximate SSPPR Query with Absolute Error Bounds] \label{def:SSPPRA}
Given a source node $s\in V$ and an error parameter $\eps$, the query requires PPR estimates $\epi(s,t)$ for all $t\in V$, such that $\big\lvert\epi(s,t)-\pi(s,t)\big\rvert\le\eps$ holds for all $t\in V$.
\end{definition}

\begin{definition}[\SSPPRD Query: Approximate SSPPR Query with Degree-Normalized Absolute Error Bounds] \label{def:SSPPRD}
On an undirected graph, given a source node $s\in V$ and an error parameter $\epsd$, the query requires PPR estimates $\epi(s,t)$ for all $t\in V$, such that $\big|\epi(s,t)/d(t)-\pi(s,t)/d(t)\big|\le\epsd$ holds for all $t\in V$.
Here, $d(t)$ denotes the degree of node $t$.
\end{definition}

\noindent
In a word, the \SSPPRA query requires the maximum absolute error to be bounded above by $\eps$, while the \SSPPRD query considers the absolute errors normalized (i.e., divided) by the degree of each node.
Note that we restrict the \SSPPRD query to undirected graphs.

This paper aims to develop \textit{sublinear} algorithms for the \SSPPRA and \SSPPRD queries with improved complexities over existing methods.
By ``sublinear algorithms,'' we refer to algorithms whose complexity bounds are sublinear in the size of the graph (but they can simultaneously depend on the error parameter, e.g., $O\big(\sqrt{n}/\eps\big)$ is considered sublinear).
We allow the algorithms to return the results as a sparse vector, which enables the output size to be $o(n)$.
Also, we regard the algorithms as acceptable if they answer the queries \textit{with high probability (w.h.p.)}, defined as ``with probability at least $1-1/n$.''

\subsection{Prior Complexity Bounds} \label{sec:prior_bounds}

\paragraph*{Lower Bounds}
To our knowledge, only trivial lower bounds are known for the \SSPPRA and \SSPPRD queries.
More precisely, for the \SSPPRA query, since the algorithm needs to return nonzero estimates for those nodes $t$ with $\pi(s,t)>\eps$, and there may exist $\Theta\big(\min(1/\eps,n)\big)$ such nodes (note that $\sum_{t\in V}\pi(s,t)=1$), the trivial lower bound for answering the query is $\Omega\big(\min(1/\eps,n)\big)$.
As we are considering sublinear bounds, we write this as $\Omega(1/\eps)$.

Similarly, for the \SSPPRD query, the algorithm needs to return nonzero estimates for nodes $t$ with $\pi(s,t)/d(t)>\epsd$, so the lower bound is $\Omega\left(1/\epsd\cdot\sum_{t\in V}\pi(s,t)/d(t)\right)$.
Note that $\sum_{t\in V}\pi(s,t)/d(t)<\sum_{t\in V}\pi(s,t)=1$ for nontrivial graphs.
Additionally, in \autoref{sec:AbsPPR_SSPPRD}, we show that if each source node $s\in V$ is chosen with probability proportional to its degree (i.e., with probability $d(s)/(2m)$), the average lower bound becomes $\Omega(1/\epsd\cdot n/m)$.

\paragraph*{Upper Bounds}
From a theoretical point of view, we summarize the best bounds to date for the \SSPPRA query as follows.
\begin{itemize}
    \item The \textit{Monte Carlo method}~\cite{fogaras2005towards} straightforwardly simulates a number of $\alpha$-discounted random walks from $s$, and computes the fraction of random walks that terminate at $t$ as the estimate $\epi(s,t)$ for each $t\in V$.
    By standard Chernoff bound arguments, it requires expected $\tO\left(1/\eps^2\right)$ time to achieve the guarantees w.h.p.
    \item \textit{Forward Push}~\cite{andersen2006local,andersen2007pagerank} is a celebrated \textit{``local-push''} algorithm for the SSPPR query, which takes as input a parameter $\rmaxf$ and runs in $O(1/\rmaxf)$ time.
    However, it only guarantees that the degree-normalized absolute error (i.e., $\max_{v\in V}\big|\epi(s,v)/d(v)-\pi(s,v)/d(v)\big|$) is bounded by $\rmaxf$ on undirected graphs, and that the \textit{$\ell_1$-error} (i.e., $\sum_{v\in V}\big|\epi(s,v)-\pi(s,v)\big|$) is bounded by $m\cdot\rmaxf$ on directed graphs.
    Thus, if we apply Forward Push to answer the \SSPPRA query: on undirected graphs, we need to set $\rmaxf=\eps/\dmax$, where $\dmax$ is the maximum node degree in $G$; on directed graphs, we need to set $\rmaxf=\eps/m$.
    These settings lead to pessimistic bounds of $O(\dmax/\eps)$ and $O(m/\eps)$, respectively.
    Note that $\dmax$ can reach $\Theta(n)$ in the worst case, and the $O(m/\eps)$ bound is not sublinear.
    \item \textit{Backward Push}~\cite{andersen2007local,andersen2008local} is a ``local-push'' algorithm for approximating $\pi(v,t)$ from all $v\in V$ to a given \textit{target node} $t\in V$, known as the \textit{Single-Target PPR (STPPR)} query.
    It takes as input a parameter $\rmaxb$ and cleanly returns estimates with an absolute error bound of $\rmaxb$.
    However, if we enforce Backward Push to answer the \SSPPRA query, we need to perform it with $\rmaxb=\eps$ for each $t\in V$, resulting in a complexity of $O(m/\eps)$ again.
    This bound is inferior, but we mention it here since it enlightens our algorithms.
\end{itemize}
In conclusion, the currently best sublinear bounds for the \SSPPRA query are $\tO\left(1/\eps^2\right)$ provided by Monte Carlo and $O(\dmax/\eps)$ on undirected graphs by Forward Push.

As for the \SSPPRD query, Forward Push~\cite{andersen2007pagerank} provides an elegant $O(1/\epsd)$ bound.
To our knowledge, no other prior methods are explicitly tailored to the \SSPPRD query.

\subsection{Motivations}

\paragraph*{Motivations for the \SSPPRA Query}

Although approximating SSPPR with absolute error guarantees is a natural problem, surprisingly, it has not been studied in depth in the literature.
We believe that this is partly because of its inherent hardness.
In particular, a line of recent research for approximating SSPPR~\cite{wang2019efficient,lin2020index,wu2021unifying,liao2022efficient,liao2023efficient} mainly focuses on providing \textit{relative error} guarantees for PPR values above a specified threshold.
We note that absolute error guarantees are harder to achieve than relative or degree-normalized absolute error guarantees, as the latter ones allow larger actual errors for nodes with larger PPR values or degrees.
Specifically, an SSPPR algorithm with absolute error guarantees can be directly modified to obtain relative or degree-normalized absolute error guarantees.

In contrast, an interesting fact is that, for the relatively less-studied STPPR query, a simple Backward Push is sufficient and efficient for absolute error guarantees.
As a result, when PPR values with absolute error guarantees are desired in some applications, STPPR methods are employed instead of SSPPR methods~\cite{yin2019scalable,wang2020personalized}.

These facts stimulate us to derive better bounds for the \SSPPRA query.
As discussed, a large gap exists between the existing upper bounds and the lower bound of $\Omega(1/\eps)$.
The previous upper bounds, namely $\tO\left(1/\eps^2\right)$, $O(m/\eps)$, and $O(\dmax/\eps)$ on undirected graphs, motivate us to devise a new algorithm that:
\begin{itemize}
\item runs in linear time w.r.t. $1/\eps$;
\item runs in sublinear time w.r.t. $m$;
\item beats the $O(\dmax/\eps)$ bound on undirected graphs.
\end{itemize}

\paragraph*{Motivations for the \SSPPRD Query}

Our study of the \SSPPRD query is motivated by a classic approach of using approximate SSPPR to perform \textit{local graph partitioning}~\cite{andersen2007pagerank,andersen2007detecting,yin2017local,fountoulakis2019variational}.
This task aims to detect a cut with provably small conductance near a specified seed node without scanning the whole graph.
To this end, this classic approach computes approximate PPR values $\epi(s,v)$ from the seed node $s$, sorts the nodes in decreasing order of $\epi(s,v)/d(v)$, and then finds a desired cut based on this order.
As the quality of this approach relies heavily on the approximation errors of the values $\epi(s,v)/d(v)$, it is natural to consider the \SSPPRD query.
Notably, in carrying out this framework, the seminal and celebrated \texttt{PageRank-Nibble} algorithm~\cite{andersen2007pagerank} employs Forward Push as a subroutine for approximating PPR values.
As it turns out, the error bounds of Forward Push match the requirements of the \SSPPRD query, and its cost dominates the overall complexity of \texttt{PageRank-Nibble}.
Therefore, an improved upper bound for the \SSPPRD query can potentially lead to faster local graph partitioning algorithms.

However, the \SSPPRD query is rarely studied afterward despite its significance.
To our knowledge, no existing method overcomes the $O(1/\epsd)$ bound of Forward Push, nor has any previous work pointed out the gap between this bound and the aforementioned lower bound.
Motivated by this, we formulate this problem and propose an algorithm that beats the known $O(1/\epsd)$ bound.

\subsection{Our Results} \label{sec:results}

We propose algorithms for the \SSPPRA and \SSPPRD queries under a unified framework, melding Monte Carlo and Backward Push in a novel and nontrivial way.
Roughly speaking, we use Backward Push to reduce the variances of the Monte Carlo estimators, and we propose a novel technique called \textit{Adaptive Backward Push} to control the cost of Backward Push for each node and balance its total cost with that of Monte Carlo.
We summarize the improved bounds achieved by our algorithms as follows.

\paragraph*{Improved Upper Bounds for the \SSPPRA Query}

We present an algorithm that answers the \SSPPRA query w.h.p., with a complexity of:
\begin{itemize}
\item expected $\tO\big(\sqrt{m}/\eps\big)$ for directed graphs;
\item expected $\tO\big(\sqrt{\dmax}/\eps\big)$ for undirected graphs.
\end{itemize}
These bounds are strictly sublinear in $m$ and linear in $1/\eps$.
Also, the $\tO\big(\sqrt{\dmax}/\eps\big)$ bound improves over the previous $O(\dmax/\eps)$ bound by up to a factor of $\Theta\big(\sqrt{n}\big)$.

Additionally, we study the special case that the underlying graph is a \textit{power-law graph} (a.k.a. \textit{scale-free graph}).
This is a renowned and widely used model for describing large real-world graphs~\cite{barabasi1999emergence,bollobas2003directed}.
Under power-law assumptions (see \autoref{assumption:power_law} in \autoref{sec:power_law}), we prove that the complexity of our algorithm diminishes to $\tO\left(n^{\gamma-1/2}/\eps\right)$ for both directed and undirected graphs, where $\gamma\in\left(\frac{1}{2},1\right)$ is the extent of the power law.
Notably, as $\gamma<1$, we have $\gamma-\frac{1}{2}<\frac{1}{2}$, so this bound is strictly $o\big(\sqrt{n}/\eps\big)$.
Also, when $\gamma\to\frac{1}{2}$, this bound approaches $\tO(1/\eps)$, matching the lower bound of $\Theta(1/\eps)$ up to logarithmic factors.
We summarize the complexity bounds of answering the \SSPPRA query in \autoref{tbl:bounds}.

\subparagraph*{Remark.}
Our algorithm for the \SSPPRA query can be adapted to approximate a more generalized form of Personalized PageRank~\cite{brin1998anatomy}, where the source node is randomly chosen from a given probability distribution vector.
We only need to construct an alias structure~\cite{walker1974new} for the distribution (this can be done in asymptotically the same time as inputting the vector) so that we can sample a source node in $O(1)$ time when performing Monte Carlo.
This modification does not change our algorithm's error guarantees and complexity bounds.
Particularly, this allows us to estimate the PageRank~\cite{brin1998anatomy} values, in which case we can sample the source nodes uniformly at random from $V$ without using the alias method.

\paragraph*{Improved Upper Bounds for the \SSPPRD Query}

We present an algorithm that answers the \SSPPRD query w.h.p., with an expected complexity of $\tO\Big(1/\epsd\cdot\sqrt{\sum_{t\in V}\pi(s,t)/d(t)}\Big)$.
This improves upon the previous $O(1/\epsd)$ bound of Forward Push towards the lower bound of $\Omega\left(1/\epsd\cdot\sum_{t\in V}\pi(s,t)/d(t)\right)$.
To see the superiority of our bound, let us consider the case when each node $s\in V$ is chosen as the source node with probability $d(s)/(2m)$.
This setting corresponds to the practical scenario where a node with larger importance is more likely to be chosen as the source node.
We show that under this setting, our bound becomes $\tO\left(1/\epsd\cdot\sqrt{n/m}\right)$, which is lower than $O(1/\epsd)$ by up to a factor of $\Theta\big(\sqrt{n}\big)$.
Recall that under this setting, the lower bound becomes $\Omega(1/\epsd\cdot n/m)$.
In \autoref{tbl:bounds_SSPPRD}, we summarize the complexity bounds of answering the \SSPPRD query.

\subparagraph*{Remark.}
If we treat $\alpha$ as a variable (as is the case in the context of local graph partitioning), the complexity bounds of our algorithms for these two queries both exhibit a linear dependence on $1/\alpha$, which is the same as existing upper bounds.
For the sake of simplicity, we treat $\alpha$ as a constant and omit this term in this work.

\subparagraph*{Paper Organization.} 
The remainder of this paper is organized as follows.
\autoref{sec:related_work} discusses some related work for PPR computation, and \autoref{sec:preliminaries} offers the preliminaries.
\autoref{sec:AbsPPR} presents the ideas and the main procedure of our proposed algorithm for the \SSPPRA query.
In \autoref{sec:analyses}, we prove our results for the \SSPPRA query by analyzing our proposed algorithm.
\autoref{sec:Chernoff} and \autoref{sec:median_trick} detail some tools used in the paper.
In \autoref{sec:deferred_proofs} and \autoref{sec:AbsPPR_SSPPRD}, we provide materials that are not included in the conference version of the paper.
\autoref{sec:deferred_proofs} contains detailed proofs deferred from the main text, while \autoref{sec:AbsPPR_SSPPRD} presents our algorithm and analyses for the \SSPPRD query.

\begin{table}[t]
    \centering
    \caption{Complexity bounds of answering the \SSPPRA query on different types of graphs. For power-law graphs, the graph can be either directed or undirected, and $\gamma\in\left(\frac{1}{2},1\right)$ denotes the exponent of the power law. We plug in $m=\tO(n)$ for power-law graphs.} \label{tbl:bounds}
    \begin{tabular}{m{0.2\linewidth} S{m{0.15\linewidth}} S{m{0.15\linewidth}} S{m{0.28\linewidth}}}
        \toprule
        & \makecell[l]{Directed\\Graphs} & \makecell[l]{Undirected\\Graphs} & \makecell[l]{Power-Law\\Graphs} \\
        \midrule
        Monte Carlo~\cite{fogaras2005towards} & $\tO\left(\dfrac{1}{\eps^2}\right)$ & $\tO\left(\dfrac{1}{\eps^2}\right)$ & $\tO\left(\dfrac{1}{\eps^2}\right)$ \\
        \addlinespace[2pt]
        Forward Push~\cite{andersen2006local} & $O\left(\dfrac{m}{\eps}\right)$ & $O\left(\dfrac{\dmax}{\eps}\right)$ & $\tO\left(\dfrac{n}{\eps}\right)$ \\
        \addlinespace[2pt]
        \rowcolor{gray!20} Ours & $\tO\left(\dfrac{\sqrt{m}}{\eps}\right)$ & $\tO\left(\dfrac{\sqrt{\dmax}}{\eps}\right)$ & $\tO\left(\dfrac{n^{\gamma-1/2}}{\eps}\right)=o\left(\dfrac{\sqrt{n}}{\eps}\right)$, approaching $\tO\left(\dfrac{1}{\eps}\right)$ when $\gamma\to\frac{1}{2}$ \\
        \addlinespace[2pt]
        \bottomrule
    \end{tabular}
\end{table}

\begin{table}[t]
    \centering
    \caption{Complexity bounds of answering the \SSPPRD query on undirected graphs.} \label{tbl:bounds_SSPPRD}
    \begin{tabular}{l Sl Sl}
        \toprule
        & \makecell[l]{Parameterized complexity\\for a given $s$} & \makecell[l]{Average complexity\\when each $s\in V$ is chosen\\with probability $d(s)/(2m)$} \\
        \midrule
        Forward Push~\cite{andersen2006local} & $O\left(\dfrac{1}{\epsd}\right)$ & $O\left(\dfrac{1}{\epsd}\right)$ \\
        \addlinespace[2pt]
        Lower Bound & $\Omega\left(\dfrac{1}{\epsd}\sum\limits_{t\in V}\dfrac{\pi(s,t)}{d(t)}\right)$ & $\Omega\left(\dfrac{1}{\epsd}\cdot\dfrac{n}{m}\right)$ \\
        \rowcolor{gray!20} Ours & $\tO\left(\dfrac{1}{\epsd}\sqrt{\sum\limits_{t\in V}\dfrac{\pi(s,t)}{d(t)}}\right)$ & $\tO\left(\dfrac{1}{\epsd}\sqrt{\dfrac{n}{m}}\right)$ \\
        \addlinespace[2pt]
        \bottomrule
    \end{tabular}
\end{table}

\section{Other Related Work} \label{sec:related_work}

As a classic task in graph mining, PPR computation has been extensively studied in the past decades, and numerous efficient approaches have been proposed.
Many recent methods combine the basic techniques of Monte Carlo, Forward Push, and Backward Push to achieve improved efficiency~\cite{lofgren2016personalized,wang2016hubppr,wei2018topppr,wang2019efficient,lin2020index,wu2021unifying,liao2022efficient,liao2023efficient}.
A key ingredient in integrating these techniques is the \textit{invariant} equation provided by Forward Push or Backward Push.
While our algorithms also leverage the invariant of Backward Push to unify it with Monte Carlo, we adopt a novel approach based on Adaptive Backward Push and conduct different analyses.

For SSPPR approximation, \FORA~\cite{wang2017fora,wang2019efficient} is a representative sublinear algorithm among a recent line of research~\cite{wang2019efficient,lin2020index,wu2021unifying,liao2022efficient,liao2023efficient}.
\FORA uses Forward Push and Monte Carlo to provide relative error guarantees for PPR values above a specified threshold w.h.p., and the subsequent work proposes numerous optimizations for it.
However, this method cannot be directly applied to the \SSPPRA query.
A notable extension of \FORA is \SpeedPPR~\cite{wu2021unifying}, which further incorporates Power Method~\cite{brin1998anatomy} to achieve higher efficiency.
Nevertheless, the complexity of \SpeedPPR is no longer sublinear.
Among other studies for SSPPR~\cite{berkhin2006bookmark,zhu2013incremental,maehara2014computing,shin2015bear,coskun2016efficient,jung2017bepi,yoon2018tpa}, \BEAR~\cite{shin2015bear} and \BEPI~\cite{jung2017bepi} are two representative approaches based on matrix manipulation.
However, they incur inferior complexities due to the large overhead of matrix computation.

There also exist many studies for other PPR queries, such as Single-Pair query~\cite{fujiwara2012efficient,lofgren2014fast,lofgren2016personalized,wang2016hubppr}, Single-Target query~\cite{wang2020personalized}, and top-$k$ query~\cite{avrachenkov2011quick,fujiwara2012fast,fujiwara2012efficient,fujiwara2013efficient,wu2014fast,wei2018topppr}.
Some recent work further considers computing PPR on dynamic graphs~\cite{yu2013irwr,ohsaka2015efficient,yu2016random,yoon2018fast} or in parallel/distributed settings~\cite{bahmani2011fast,guo2017distributed,guo2017parallel,shi2019realtime,lin2019distributed,wang2019parallelizing,hou2021massively}.
These methods often utilize specifically designed methodologies and techniques, hence they are orthogonal to our work.

To sum up, despite the large body of studies devoted to PPR computation, the \SSPPRA and \SSPPRD queries are still not explored in depth.
This is because the relevant approaches either are unsuitable for these two queries or exhibit at least linear complexities.
We also note that many related studies optimize PPR computation from an engineering viewpoint instead of a theoretical one, and thus they do not provide better complexity bounds.

\section{Notations and Tools} \label{sec:preliminaries}

\subsection{Notations}

We use $\din(v)$ and $\dout(v)$ to denote the in-degree and out-degree of a node $v\in V$, respectively.
Additionally, $\Nin(v)$ and $\Nout(v)$ denote the in-neighbor and out-neighbor set of $v$, respectively.
In the case of undirected graphs, we use $d(v)$ to represent the degree of $v$, and we define $\dmax=\max_{v\in V}d(v)$ to indicate the maximum degree in $G$.

\subsection{Backward Push} \label{sec:BP}

\begin{algorithm}[t]
    \DontPrintSemicolon
    \caption{\texttt{BackwardPush}} \label{alg:bp}
    \KwIn{graph $G$, decay factor $\alpha$, target node $t$, threshold $\rmaxb$}
    \KwOut{backward reserves $\epib(v,t)$ and residues $\rb(v,t)$ for all $v\in V$}
    $\epib(v,t)\gets0$ for all $v\in V$ \;
    $\rb(t,t)\gets1$ and $\rb(v,t)\gets0$ for all $v\in V\setminus\{t\}$ \;
    \While{$\exists v\in V$ \textup{such that} $\rb(v,t)>\rmaxb$}
    {
        pick an arbitrary $v\in V$ with $\rb(v,t)>\rmaxb$ \;
        \For{\textup{each} $u\in\Nin(v)$}
        {
           $\rb(u,t)\gets\rb(u,t)+(1-\alpha)\cdot\rb(v,t)/\dout(u)$ \;
        }
        $\epib(v,t)\gets\epib(v,t)+\alpha\cdot\rb(v,t)$ \;
        $\rb(v,t)\gets0$ \;
    }
    \Return $\epib(v,t)$ and $\rb(v,t)$ for all $v\in V$\;
\end{algorithm}

Backward Push~\cite{andersen2008local,lofgren2013personalized} is a simple and classic algorithm for approximating STPPR, that is, estimating $\pi(v,t)$ from all $v\in V$ to a given target node $t\in V$.
It works by repeatedly performing \textit{reverse pushes}, which conceptually simulate random walks from the backward direction deterministically.
It takes as input a parameter $\rmaxb$ to control the depth of performing the pushes: a smaller $\rmaxb$ leads to deeper pushes.

Specifically, Backward Push maintains \textit{reserves} $\epib(v,t)$ and \textit{residues} $r(v,t)$ for all $v\in V$, where $\epib(v,t)$ is an underestimate of $\pi(v,t)$ and $r(v,t)$ is the probability mass to be propagated.
A reverse push operation for a node $v$ transfers $\alpha$ portion of $r(v,t)$ to $\epib(v,t)$ and propagates the remaining probability mass to the in-neighbors of $v$, as per the probability that a random walk at the in-neighbors proceeds to $v$.
As shown in~\autoref{alg:bp}, Backward Push initially sets all reserves and residues to be $0$ except that $r(t,t)=1$, and then repeatedly performs reverse pushes to nodes $v$ with $r(v,t)>\rmaxb$.
After that, it returns $q(v,t)$'s as the estimates.

In this paper, we use the following properties of Backward Push~\cite{lofgren2013personalized}:
\begin{itemize}
    \item The results of Backward Push satisfy $r(v,t)\le\rmaxb$ and $\big|\epib(v,t)-\pi(v,t)\big|\le\rmaxb$, $\forall v\in V$.
    \item The results of Backward Push satisfy the following invariant:
    \begin{align}
        \pi(v,t)=\epib(v,t)+\sum_{u\in V}\pi(v,u)\rb(u,t),\ \forall v\in V. \label{eqn:BP_invariant}
    \end{align}
    \item The complexity of Backward Push is
    \begin{align}
        O\left(\frac{1}{\rmaxb}\sum_{v\in V}\pi(v,t)\din(v)\right). \label{eqn:BP_complexity}
    \end{align}
    \item Running Backward Push with parameter $\rmaxb$ for each $t\in V$ takes $O(m/\rmaxb)$ time.
\end{itemize}

\subsection{Power-Law Assumption} \label{sec:power_law}

Power-law graphs are an extensively used model for describing real-world graphs~\cite{barabasi1999emergence,bollobas2003directed}.
Regarding PPR computation, it is observed in~\cite{bahmani2010fast} that the PPR values on power-law graphs also follow a power-law distribution.
Formally, in our analyses for power-law graphs, we use the following assumption, which has been adopted in several relevant works for graph analysis~\cite{bahmani2010fast,lofgren2016personalized,wei2019prsim}:
\begin{assumption}[Power-Law Graph] \label{assumption:power_law}
In a power-law graph, for any source node $v\in V$, the $i$-th largest PPR value w.r.t. $v$ equals $\Theta\left(\frac{i^{-\gamma}}{n^{1-\gamma}}\right)$, where $1\le i\le n$ and $\gamma\in\left(\frac{1}{2},1\right)$ is the exponent of the power law.
\end{assumption}

\section{Our Algorithm for the \SSPPRA Query} \label{sec:AbsPPR}

This section presents our algorithm for answering the \SSPPRA query with improved complexity bounds.
Our algorithm for the \SSPPRD query is given in \autoref{sec:AbsPPR_SSPPRD}.
Before diving into the details, we give high-level ideas and introduce key techniques for devising and analyzing the algorithm.

\subsection{High-Level Ideas} \label{sec:AbsPPR_ideas}

Recall that for the \SSPPRA query, a fixed absolute error bound is required for each node $t\in V$.
We find it hard to achieve this using Forward Push and Monte Carlo, as they inherently incur larger errors for nodes with larger degrees or PPR values (for Monte Carlo, this can be seen when analyzing it using Chernoff bounds).
Thus, to answer the \SSPPRA query, it is crucial to reduce the errors for these hard-case nodes efficiently.
A straightforward idea is to run Backward Push from these nodes, although using an STPPR algorithm to answer the SSPPR query seems counterintuitive.
As performing Backward Push alone for all nodes requires $O(m/\eps)$ time, we combine it with Monte Carlo to achieve a lower complexity.

In a word, our proposed algorithms employ Backward Push to reduce the number of random walks needed in Monte Carlo.
Intuitively, by running Backward Push for a node $t$, we simulate random walks backward from $t$.
Consequently, when performing forward random walks in Monte Carlo, our objective shifts from reaching node $t$ to reaching the intermediate nodes touched by Backward Push.
This method significantly reduces the variances of the Monte Carlo estimators since the random walks can increment the estimated values even if they fail to reach $t$.
However, a major difficulty is that we cannot afford to perform deep Backward Push for each $t\in V$, which is both expensive and unnecessary.
To address this issue, we propose the following central technique: Adaptive Backward Push.

\subparagraph*{Adaptive Backward Push.}
A crucial insight behind our algorithms is that performing deep Backward Push for each $t\in V$ is wasteful.
This is because doing so yields accurate estimates for $\pi(v,t)$ for all $v\in V$, but for SSPPR queries, only $\pi(s,t)$ is required to be estimated.
Thus, instead of performing Backward Push deeply to propagate enough probability mass to each node, we only need to ensure that the probability mass pushed backward from $t$ to $s$ is sufficient to yield an accurate estimate for $\pi(s,t)$.
This motivates us to perform Backward Push \textit{adaptively}.
More precisely, we wish to set smaller $\rmaxb(t)$ for $t$ with larger $\pi(s,t)$, rendering the Backward Push process for them deeper.
An intuitive explanation is that, for larger $\pi(s,t)$, the minimally acceptable estimates $\pi(s,t)-\eps$ are larger, so we need to perform Backward Push deeper to push more probability mass to $s$.
As an extreme example, if we know that $\pi(s,t)\le\eps$ for some $t$, then we can simply return $\epi(s,t)=0$ as its PPR approximation, so we do not need to perform Backward Push for these nodes with small PPR values.
On the other hand, our technique also adaptively balances the cost of Backward Push and Monte Carlo, which will be introduced in the next subsection.

To implement Adaptive Backward Push, a natural idea would be directly setting $\rmaxb(t)$ to be inversely proportional to $\pi(s,t)$.
At first glance, this idea seems paradoxical since the $\pi(s,t)$ values are exactly what we aim to estimate.
However, it turns out that rough estimates of them suffice for our purpose.
In fact, Monte Carlo offers a simple and elegant way to roughly approximate $\pi(s,t)$ with relatively low overheads.
Thus, we need to run Monte Carlo to obtain rough PPR estimates before performing Adaptive Backward Push.
Despite the simplicity of the ideas, the detailed procedure of the algorithm and its analyses are nontrivial, as we elaborate below.

\subsection{Techniques} \label{sec:AbsPPR_techniques}

Now, we introduce several additional techniques used in our algorithms.

\subparagraph*{Three-Phase Framework.}
As discussed above, before performing Adaptive Backward Push, we need to perform Monte Carlo to obtain rough PPR estimates.
Also, the results of Backward Push and Monte Carlo are combined to derive the final results.
For ease of analysis, we conduct Monte Carlo twice, yielding two independent sets of approximations.
This leads to our \textit{three-phase framework}:
\begin{enumerate}[(I)]
    \item running Monte Carlo to obtain rough PPR estimates;
    \item performing Adaptive Backward Push to obtain backward reserves and residues;
    \item running Monte Carlo and combining the results with those of Backward Push to yield the final estimates.
\end{enumerate}
However, there are some difficulties in carrying out this framework.
We discuss them in detail below and provide a more accurate description of our algorithm in \autoref{sec:AbsPPR_main_alg}.

\subparagraph*{Identifying Candidate Nodes.}
As mentioned earlier, if we know that $\pi(s,t)\le\eps$ for some $t$, then we can safely return $\epi(s,t)=0$.
This means that we only need to consider nodes $t$ with $\pi(s,t)>\eps$.
Fortunately, when running Monte Carlo to obtain rough estimates, we can also identify these nodes (w.h.p.), thus avoiding the unnecessary cost of performing Adaptive Backward Push for other nodes.
We use $C$ to denote the \textit{candidate set} that consists of these identified \textit{candidate nodes}.
In light of this, Phase~\RomanNumeralCaps{1} serves two purposes: determining the candidate nodes $t\in C$ and obtaining rough estimates for their PPR values, denoted as $\pi'(s,t)$'s.
Subsequently, in Phase~\RomanNumeralCaps{2}, we perform Adaptive Backward Push for these candidate nodes $t\in C$, where the thresholds are set to be inversely proportional to $\pi'(s,t)$.

\subparagraph*{Estimation Formula.}
Let us take a closer look at the results of Backward Push for a candidate node $t\in C$.
From the invariant of Backward Push (\autoref{eqn:BP_invariant}), we have
\begin{align*}
    \pi(s,t)=\epib(s,t)+\sum_{v\in V}\pi(s,v)\rb(v,t).
\end{align*}
This essentially expresses the desired PPR value $\pi(s,t)$ as $\epib(s,t)$ plus a linear combination of $\pi(s,v)$'s with coefficients $\rb(v,t)$'s.
Now, we can obtain an estimate $\epi(s,t)$ by substituting $\pi(s,v)$'s by their Monte Carlo estimates obtained in Phase~\RomanNumeralCaps{3}, denoted by $\pipp(s,v)$'s:
\begin{align}
    \epi(s,t)=\epib(s,t)+\sum_{v\in V}\pipp(s,v)\rb(v,t). \label{eqn:combine}
\end{align}
As Backward Push guarantees that $\rb(v,t)\le\rmaxb$ for all $v\in V$, the coefficients of these Monte Carlo estimates are small, making the variance of $\epi(s,t)$ small.
Thus, by carefully setting the parameters of Backward Push and Monte Carlo, we can bound this variance and apply \textit{Chebyshev's inequality} to obtain the desired absolute error bound on $\epi(s,t)$ with constant success probability.
Then, we leverage the \textit{median trick}~\cite{jerrum1986random} (see \autoref{sec:median_trick}) to amplify this probability while only introducing an additional logarithmic factor to the complexity.
Therefore, in Phase~\RomanNumeralCaps{3}, we run Monte Carlo several times to obtain independent samples of $\epi(s,t)$, denoted as $\epi_i(s,t)$.
After that, we compute their median values as the final estimates.

\subparagraph*{Balancing Phases~\RomanNumeralCaps{2} and \RomanNumeralCaps{3}.}
Finally, to optimize the overall complexity of the algorithm, it is essential to strike a balance between the cost of Phases~\RomanNumeralCaps{2} and \RomanNumeralCaps{3}.
In fact, the following balancing strategy is another manifestation of the adaptiveness of our algorithm.
In particular, performing more Adaptive Backward Push in Phase~\RomanNumeralCaps{2} results in fewer random walks needed in Phase~\RomanNumeralCaps{3}.
In our analysis part (\autoref{sec:analyses}), we find that the complexity bound of Adaptive Backward Push is inversely proportional to the number of random walk samplings in Phase~\RomanNumeralCaps{3} (denoted as $n_r$), and the cost of Monte Carlo in Phase~\RomanNumeralCaps{3} is proportional to $n_r$.
The problem is that we do not know a priori the optimal setting of $n_r$ in terms of balancing Phases~\RomanNumeralCaps{2} and \RomanNumeralCaps{3}.
As a workaround, we propose to try running Adaptive Backward Push with exponentially decreasing $n_r$, and terminate this process once its paid cost exceeds the expected cost of simulating $n_r$ random walks.
Intuitively, in this way, our algorithm achieves the same asymptotic complexity as if it knew the optimal $n_r$.
We will describe the detailed process and formally state this claim below.

\subsection{Main Algorithm} \label{sec:AbsPPR_main_alg}

\autoref{alg:AbsPPR} outlines the pseudocode of our proposed algorithm for the \SSPPRA query.
It consists of three phases: Phase~\RomanNumeralCaps{1} (\autoref{line:AbsPPR_phase1_begin} to \autoref{line:AbsPPR_phase1_end}), Phase~\RomanNumeralCaps{2} (\autoref{line:AbsPPR_phase2_begin} to \autoref{line:AbsPPR_phase2_end}), and Phase~\RomanNumeralCaps{3} (\autoref{line:AbsPPR_phase3_begin} to \autoref{line:AbsPPR_phase3_end}).
We recap the purposes of the three phases as follows:
\begin{enumerate}[(I)]
    \item running Monte Carlo to obtain estimates $\pi'(s,v)$ and derive the candidate set $C$;
    \item performing Adaptive Backward Push to obtain reserves $\epib(s,t)$ and residues $\rb(v,t)$ for candidate nodes $t\in C$, where the associated $n_r$ (the number of random walk samplings in Phase~\RomanNumeralCaps{3}) is exponentially decreased and the stopping rule is designed to balance Phases~\RomanNumeralCaps{2} and \RomanNumeralCaps{3};
    \item running Monte Carlo several times, combining their results $\pi''(s,v)$ with those of Adaptive Backward Push, and finally taking the median values as the results $\epi(s,t)$.
\end{enumerate}

Concretely, in Phase~\RomanNumeralCaps{1}, the algorithm runs Monte Carlo (as introduced in~\autoref{sec:prior_bounds}) with $\left\lceil12\ln\left(2n^3\right)\big/\eps\right\rceil$ random walks (\autoref{line:AbsPPR_phase1_begin}), obtain estimates $\pi'(s,v)$, and sets the candidate set $C$ to be $\left\{t\in V:\pi'(s,t)>\frac{1}{2}\eps\right\}$ (\autoref{line:AbsPPR_phase1_end}).
Next, Phase~\RomanNumeralCaps{2} implements Adaptive Backward Push.
It first initializes $n_r$, the number of random walk samplings in Phase~\RomanNumeralCaps{3}, to be $\lceil n/\eps\rceil$, and sets $n_t$, the number of trials for the median trick, to be $\left\lceil18\ln\left(2n^2\right)\right\rceil$ (\autoref{line:AbsPPR_phase2_begin}).
Also, $\rmaxb(t)$ is initialized to be $\frac{\eps^{2}n_r}{6\pi'(s,t)}$ for each candidate node $t\in C$ (\autoref{line:AbsPPR_phase2_set_rmaxb}).
In the subsequent loop (\autoref{line:AbsPPR_phase2_loop_begin} to \autoref{line:AbsPPR_phase2_loop_end}), the algorithm repeatedly tries to run Backward Push (\autoref{alg:bp}) for each candidate node $t\in C$ with parameter $\frac{1}{2}\rmaxb(t)$, and iteratively halves $n_r$ as well as $\rmaxb$ until the cost of Backward Push exceeds the expected cost of simulating random walks in Phase~\RomanNumeralCaps{3}.
In that case, the algorithm immediately terminates Backward Push and jumps to \autoref{line:AbsPPR_phase2_end} (without halving $n_r$ and $\rmaxb(t)$), where Backward Push is invoked again with the current $\rmaxb(t)$ to obtain the reserves and residues.
Finally, in Phase~\RomanNumeralCaps{3}, the algorithm runs Monte Carlo with $n_r$ random walks (\autoref{line:AbsPPR_phase3_MonteCarlo}) and computes estimates $\epi_i(s,t)$ according to \autoref{eqn:combine} (\autoref{line:AbsPPR_combine}).
This process is repeated $n_t$ times (\autoref{line:AbsPPR_phase3_begin}), and the final approximations $\epi(s,t)$ are computed as $\median_{i=1}^{n_t}\big\{\epi_i(s,t)\big\}$ (\autoref{line:AbsPPR_phase3_end}).

Note that in \autoref{line:AbsPPR_combine}, for each $t\in C$, we only need to iterate through nodes $v$ with a nonzero residue $r(v,t)$ to compute the summation.
Thus, we can implement \autoref{line:AbsPPR_combine} in asymptotically the same time as running Backward Push in Phase~\RomanNumeralCaps{2}.

In the following section, we demonstrate the correctness and efficiency of \autoref{alg:AbsPPR}.

\begin{algorithm}[t]
    \DontPrintSemicolon
    \caption{Our algorithm for the \SSPPRA query} \label{alg:AbsPPR}
    \KwIn{graph $G=(V,E)$, decay factor $\a$, source node $s$, error parameter $\eps$}
    \KwOut{estimates $\epi(s,t)$ for all $t\in V$}
    // Phase~\RomanNumeralCaps{1} \;
    $\pi'(s,v)\text{ for all }v\in V\gets$ Monte Carlo estimates with $\left\lceil12\ln\left(2n^3\right)\big/\eps\right\rceil$ random walks \; \label{line:AbsPPR_phase1_begin}
    $C\gets\left\{t\in V:\pi'(s,t)>\frac{1}{2}\eps\right\}$ \; \label{line:AbsPPR_phase1_end}
    // Phase~\RomanNumeralCaps{2} \;
    $n_r\gets\lceil n/\eps\rceil,n_t\gets\left\lceil18\ln\left(2n^2\right)\right\rceil$ \; \label{line:AbsPPR_phase2_begin}
    $\rmaxb(t)\gets\frac{\eps^{2}n_r}{6\pi'(s,t)}$ for all $t\in C$\; \label{line:AbsPPR_phase2_set_rmaxb}
    \While{\textup{True} \label{line:AbsPPR_phase2_loop_begin}}
    {
        // the process in this loop is called an iteration \;
        // in this iteration, try running Backward Push with $\frac{1}{2}\rmaxb(t)$'s \;
        \For{\textup{each} $t\in C$}
        {
            run $\texttt{BackwardPush}\left(G,\a,t,\frac{1}{2}\rmaxb(t)\right)$, but once the total cost of Backward Push in this iteration exceeds the expected cost of simulating $n_t\cdot\frac{1}{2}n_r$ random walks, terminate and break the outer loop (i.e., jump to \autoref{line:AbsPPR_phase2_end})\;
        }
        $n_r\gets\left\lfloor\frac{1}{2}n_r\right\rfloor$ \;             $\rmaxb(t)\gets\frac{1}{2}\rmaxb(t)$ for all $t\in C$\; \label{line:AbsPPR_phase2_loop_end}
    }
    $\epib(v,t),\rb(v,t)\text{ for }v\in V\gets\texttt{BackwardPush}\big(G,\a,t,\rmaxb(t)\big)$ \; \label{line:AbsPPR_phase2_end}
    // Phase~\RomanNumeralCaps{3} \;
    \For{$i$ \textup{from} $1$ \textup{to} $n_t$ \label{line:AbsPPR_phase3_begin}}
    {
        $\pi''(s,v)\text{ for all }v\in V\gets$ Monte Carlo estimates with $n_r$ random walks \; \label{line:AbsPPR_phase3_MonteCarlo}
        $\epi_i(s,t)\gets\epib(s,t)+\sum_{v\in V}\pipp(s,v)\rb(v,t)$ for all $t\in C$\; \label{line:AbsPPR_combine}
    }
    $\epi(s,t)\gets\median_{i=1}^{n_t}\big\{\epi_i(s,t)\big\}$ for all $t\in C$\; \label{line:AbsPPR_phase3_end}
    \Return $\epi(s,t)$ for all $t\in V$ \;
\end{algorithm}

\section{Analyses for the \SSPPRA Query} \label{sec:analyses}

We give correctness and complexity analyses of our \autoref{alg:AbsPPR} for the \SSPPRA query.
The analyses for the \SSPPRD query are given in \autoref{sec:AbsPPR_SSPPRD}.

First, we give error bounds for the Monte Carlo estimates in Phase~\RomanNumeralCaps{1}.
These are typical results for the Monte Carlo method, and we give a proof in \autoref{sec:deferred_proofs} for completeness.

\begin{lemma} \label{lem:MC_SSPPRA}
Let $\pi'(s,v)$ denote the estimate for $\pi(s,v)$ obtained in Phase~\RomanNumeralCaps{1} of \autoref{alg:AbsPPR}. With probability at least $1-1/n^2$, we have $\frac{1}{2}\pi(s,v)\le\pi'(s,v)\le\frac{3}{2}\pi(s,v)$ for all $v\in V$ with $\pi(s,v)\ge\frac{1}{4}\eps$, and $\pi'(s,v)\le\pi(s,v)+\frac{1}{4}\eps$ for all $v\in V$ with $\pi(s,v)<\frac{1}{4}\eps$.
\end{lemma}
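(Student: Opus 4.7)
The plan is to exploit the fact that Phase~\RomanNumeralCaps{1} runs $n_w := \lceil 12\ln(2n^3)/\eps \rceil$ independent $\alpha$-discounted random walks from $s$, so that for each fixed $v\in V$ the count $X_v$ of walks terminating at $v$ is $\mathrm{Binomial}(n_w,\pi(s,v))$ and $\pi'(s,v)=X_v/n_w$ is an unbiased estimator of $\pi(s,v)$ with mean $\mu_v := n_w\pi(s,v)$. I would split the analysis into two regimes according to whether $\pi(s,v)$ is large or small relative to $\eps$, apply a Chernoff-type tail bound in each, and union-bound over $v\in V$ at the end.

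In the large-$\pi$ regime $\pi(s,v)\ge\eps/4$, we have $\mu_v=\Omega(\ln n)$, and I would invoke the standard multiplicative Chernoff bound from \autoref{sec:Chernoff} with relative deviation $\delta=1/2$; the choice of $n_w$ makes $\mu_v$ large enough that $\Pr\bigl[|\pi'(s,v)-\pi(s,v)|>\pi(s,v)/2\bigr]\le 1/n^3$, which after a union bound over the at most $n$ such nodes gives the two-sided multiplicative guarantee with probability $\ge 1-1/(2n^2)$. In the small-$\pi$ regime $\pi(s,v)<\eps/4$, the desired additive slack $n_w\eps/4$ corresponds to a relative upper-tail deviation $\delta>1$, so I would instead use the one-sided ``large-$\delta$'' variant of the Chernoff bound (of the form $\exp(-\delta\mu/3)$), which yields $\Pr\bigl[X_v-\mu_v>n_w\eps/4\bigr]\le 1/n^3$; a second union bound over the remaining $\le n$ nodes gives the additive guarantee with probability $\ge 1-1/(2n^2)$.

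Combining the two via a final union bound yields the stated $1-1/n^2$ high-probability conclusion. The only slightly delicate point is matching the concentration bounds to the specific choice of $n_w$ in both regimes, which requires picking the correct Chernoff variant in each case (the multiplicative $\exp(-\delta^2\mu/3)$ form for the first and the large-$\delta$ one-sided form for the second), and tracking constants carefully through the two union bounds; once the right inequalities from \autoref{sec:Chernoff} are plugged in, the rest is routine bookkeeping and follows the standard Monte Carlo template for PPR estimation.
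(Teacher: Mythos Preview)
Your proposal is correct and mirrors the paper's proof: both split into the regimes $\pi(s,v)\ge\tfrac{1}{4}\eps$ and $\pi(s,v)<\tfrac{1}{4}\eps$, apply a Chernoff bound to obtain a per-node failure probability of $O(1/n^3)$, and take a union bound over $V$. The only minor difference is that the paper applies the single inequality of \autoref{thm:chernoff} in both regimes (with $\lambda=\tfrac{1}{2}\pi(s,v)$ and $\lambda=\tfrac{1}{4}\eps$, respectively) rather than invoking two separate Chernoff variants; note that \autoref{sec:Chernoff} states only that one form, not the $\exp(-\delta^2\mu/3)$ or large-$\delta$ versions you reference.
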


\noindent We say that Phase~\RomanNumeralCaps{1} \textit{succeeds} if the properties in \autoref{lem:MC_SSPPRA} are satisfied.
Our following discussions are implicitly conditioned on the success of Phase~\RomanNumeralCaps{1}, and we shall not specify this explicitly for ease of presentation.
We only take this condition into account when considering the overall success probability of the algorithm.
Also, we regard $\pi'(s,v)$ as fixed values when analyzing the remaining phases.
Next, we show that Phase~\RomanNumeralCaps{1} prunes non-candidate nodes properly and guarantees constant relative error bounds for candidate nodes.

\begin{lemma} \label{lem:phase1_SSPPRA}
All non-candidate nodes $t'\notin C$ satisfy $\pi(s,t')\le\eps$, and all candidate nodes $t\in C$ satisfy $\frac{1}{2}\pi(s,t)\le\pi'(s,t)\le\frac{3}{2}\pi(s,t)$.
\end{lemma}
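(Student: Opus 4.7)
The lemma is essentially a case analysis built on top of \autoref{lem:MC_SSPPRA}, which we may freely assume since the discussion is conditioned on the success of Phase~\RomanNumeralCaps{1}. The plan is to separately handle the two assertions and, within each, split on whether the true value $\pi(s,v)$ lies above or below the threshold $\frac{1}{4}\eps$ used in \autoref{lem:MC_SSPPRA}. The definition $C=\{t\in V:\pi'(s,t)>\frac{1}{2}\eps\}$ translates into a condition on $\pi'(s,\cdot)$ that we combine with the bounds provided by \autoref{lem:MC_SSPPRA}.

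\textbf{Non-candidate nodes.} Fix $t'\notin C$, so $\pi'(s,t')\le\frac{1}{2}\eps$. If $\pi(s,t')<\frac{1}{4}\eps$, then trivially $\pi(s,t')<\frac{1}{4}\eps\le\eps$. Otherwise $\pi(s,t')\ge\frac{1}{4}\eps$, and \autoref{lem:MC_SSPPRA} gives $\pi'(s,t')\ge\frac{1}{2}\pi(s,t')$, so $\pi(s,t')\le 2\pi'(s,t')\le\eps$. Either way $\pi(s,t')\le\eps$.

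\textbf{Candidate nodes.} Fix $t\in C$, so $\pi'(s,t)>\frac{1}{2}\eps$. I claim $\pi(s,t)\ge\frac{1}{4}\eps$; this is the only nontrivial step, and I would prove it by contradiction. If instead $\pi(s,t)<\frac{1}{4}\eps$, then the second part of \autoref{lem:MC_SSPPRA} yields $\pi'(s,t)\le\pi(s,t)+\frac{1}{4}\eps<\frac{1}{4}\eps+\frac{1}{4}\eps=\frac{1}{2}\eps$, contradicting $t\in C$. Hence $\pi(s,t)\ge\frac{1}{4}\eps$, and the first part of \autoref{lem:MC_SSPPRA} immediately gives $\frac{1}{2}\pi(s,t)\le\pi'(s,t)\le\frac{3}{2}\pi(s,t)$, completing the proof.

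\textbf{Main obstacle.} There is no substantive obstacle: the statement is a clean consequence of the dichotomy built into \autoref{lem:MC_SSPPRA}, and the only subtlety is noticing that the thresholds $\frac{1}{2}\eps$ (defining $C$) and $\frac{1}{4}\eps$ (separating the two regimes in \autoref{lem:MC_SSPPRA}) were chosen precisely so that the additive slack $\frac{1}{4}\eps$ cannot push a truly small $\pi(s,t)$ across the candidate threshold. Verifying this balance is the only thing to get right.
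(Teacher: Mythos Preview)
Your proof is correct and essentially identical to the paper's: both arguments reduce to the same case analysis on whether $\pi(s,\cdot)$ lies above or below $\tfrac{1}{4}\eps$, with the candidate-node part proved exactly as you do (contradiction via the additive bound) and the non-candidate part differing only cosmetically (you split into two cases, the paper phrases it as a single contradiction using $\pi(s,t')>\eps\Rightarrow\pi'(s,t')>\tfrac{1}{2}\eps$).
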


\noindent The proof of \autoref{lem:phase1_SSPPRA} can be found in \autoref{sec:deferred_proofs}.
Based on \autoref{lem:phase1_SSPPRA}, we prove the correctness and complexity bounds of \autoref{alg:AbsPPR} separately.

\paragraph*{Correctness Analysis}
Primarily, the following lemma justifies the unbiasedness of the estimators $\epi_i(s,t)$:

\begin{lemma} \label{lem:AbsPPR_unbias}
For all the candidate nodes $t\in C$, $\epi_i(s,t)$ are unbiased estimators for $\pi(s,t)$, i.e., $\E\big[\epi_i(s,t)\big]=\pi(s,t)$.
\end{lemma}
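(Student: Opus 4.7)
The plan is to exploit the independence between the Phase~\RomanNumeralCaps{3} random walks and everything computed in Phases~\RomanNumeralCaps{1}--\RomanNumeralCaps{2}, together with the Backward Push invariant. First I would fix (condition on) the outputs of Phases~\RomanNumeralCaps{1} and \RomanNumeralCaps{2}: this makes the candidate set $C$, the thresholds $\rmaxb(t)$, and consequently the reserves $\epib(v,t)$ and residues $\rb(v,t)$ all deterministic quantities. Under this conditioning, the only randomness left in $\epi_i(s,t)=\epib(s,t)+\sum_{v\in V}\pipp(s,v)\rb(v,t)$ lives in the Monte Carlo estimates $\pipp(s,v)$ produced in Phase~\RomanNumeralCaps{3}, and these are drawn independently of Phases~\RomanNumeralCaps{1}--\RomanNumeralCaps{2}.

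Next I would use linearity of expectation together with the standard unbiasedness of Monte Carlo: each $\pipp(s,v)$ is the empirical fraction of $n_r$ independent $\alpha$-discounted random walks from $s$ that terminate at $v$, so $\E\bigl[\pipp(s,v)\bigr]=\pi(s,v)$. Hence
\begin{align*}
\E\bigl[\epi_i(s,t)\,\big|\,\text{Phases I--II}\bigr]
=\epib(s,t)+\sum_{v\in V}\pi(s,v)\,\rb(v,t).
\end{align*}
Invoking the Backward Push invariant (\autoref{eqn:BP_invariant}) with $v=s$ and target $t$, the right-hand side equals $\pi(s,t)$. Since this identity holds for every realization of Phases~\RomanNumeralCaps{1}--\RomanNumeralCaps{2}, taking the outer expectation over those phases (via the tower property) yields $\E\bigl[\epi_i(s,t)\bigr]=\pi(s,t)$, as required.

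\textbf{Anticipated difficulty.} There is essentially no computational obstacle here; the only subtlety is conceptual. One must be careful to note that $\epib(s,t)$ and $\rb(v,t)$ are themselves random (they depend on the rough estimates $\pi'(s,\cdot)$ through the adaptive choice of $\rmaxb(t)$ and on the stopping rule in Phase~\RomanNumeralCaps{2}), so the invariant cannot be applied naively under the unconditional expectation. The clean way around this is the conditioning argument above: the Backward Push invariant is a deterministic identity that holds pointwise for whatever reserves and residues Phase~\RomanNumeralCaps{2} returns, and the independence of the Phase~\RomanNumeralCaps{3} walks ensures that the conditional expectation of $\pipp(s,v)$ is $\pi(s,v)$ regardless of the Phase~\RomanNumeralCaps{1}--\RomanNumeralCaps{2} outcome. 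Once this is stated carefully, the tower property closes the argument in one line.
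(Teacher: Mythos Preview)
Your proposal is correct and follows essentially the same approach as the paper: linearity of expectation, unbiasedness of the Monte Carlo estimates $\pipp(s,v)$, and the Backward Push invariant~\eqref{eqn:BP_invariant}. The only difference is that you make the conditioning on Phases~\RomanNumeralCaps{1}--\RomanNumeralCaps{2} explicit via the tower property, whereas the paper handles this by a blanket convention stated at the start of the analysis section (treating $\pi'(s,\cdot)$, and hence the reserves and residues, as fixed); your version is arguably the more careful presentation of the same idea.
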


\noindent The proof of \autoref{lem:AbsPPR_unbias} is in \autoref{sec:deferred_proofs}.
Next, we prove the following key lemma, which bounds the variances of the estimators $\epi_i(s,t)$ in terms of $n_r$, $\rmaxb(t)$, and $\pi(s,t)$:

\begin{lemma} \label{lem:AbsPPR_variance}
The variances $\Var\big[\epi_i(s,t)\big]$ are bounded above by $1/n_r\cdot\rmaxb(t)\pi(s,t)$, for all $t\in C$. Here, $n_r$ is the number of random walks in Phase~\RomanNumeralCaps{3}. This leads to $\Var\big[\epi_i(s,t)\big]\le\frac{1}{3}\eps^2$.
\end{lemma}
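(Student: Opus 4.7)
The plan is to exploit the fact that Phase~\RomanNumeralCaps{2} produces deterministic outputs $\epib(s,t)$ and $\rb(v,t)$, so the only randomness in $\epi_i(s,t)=\epib(s,t)+\sum_{v\in V}\pipp(s,v)\rb(v,t)$ lies in the $n_r$ fresh random walks of Phase~\RomanNumeralCaps{3}. Writing the $j$-th walk's terminating node as $V_j$, I would express the Monte Carlo estimate as $\pipp(s,v)=\frac{1}{n_r}\sum_{j=1}^{n_r}\indicator{V_j=v}$, and then observe
\begin{align*}
\epi_i(s,t)=\epib(s,t)+\frac{1}{n_r}\sum_{j=1}^{n_r} Y_j,\qquad Y_j:=\rb(V_j,t).
\end{align*}
Since the walks are i.i.d., $\Var[\epi_i(s,t)]=\frac{1}{n_r}\Var[Y_1]\le\frac{1}{n_r}\E[Y_1^2]=\frac{1}{n_r}\sum_{v\in V}\pi(s,v)\rb(v,t)^2$.

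Next, I would use the two guarantees of Backward Push recorded in \autoref{sec:BP}. Since \autoref{line:AbsPPR_phase2_end} invokes \texttt{BackwardPush} with the current $\rmaxb(t)$, the returned residues satisfy $\rb(v,t)\le\rmaxb(t)$ for every $v\in V$. This lets me bound one factor of $\rb(v,t)$ by $\rmaxb(t)$ and pull it out of the sum:
\begin{align*}
\sum_{v\in V}\pi(s,v)\rb(v,t)^2\le\rmaxb(t)\sum_{v\in V}\pi(s,v)\rb(v,t).
\end{align*}
The remaining sum is controlled by the Backward Push invariant \eqref{eqn:BP_invariant}, which gives $\sum_{v\in V}\pi(s,v)\rb(v,t)=\pi(s,t)-\epib(s,t)\le\pi(s,t)$ (using $\epib(s,t)\ge0$). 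Combining, $\Var[\epi_i(s,t)]\le\frac{1}{n_r}\rmaxb(t)\pi(s,t)$, which is the first claim.

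For the numerical bound $\frac{1}{3}\eps^2$, the key observation is that the loop in Phase~\RomanNumeralCaps{2} halves $n_r$ and $\rmaxb(t)$ simultaneously, so the ratio $\rmaxb(t)/n_r$ is invariant across iterations (up to the minor effect of the floor in $\lfloor n_r/2\rfloor$, which I would absorb into a harmless factor). From the initialization in \autoref{line:AbsPPR_phase2_set_rmaxb}, this ratio equals $\frac{\eps^2}{6\pi'(s,t)}$ throughout, hence
\begin{align*}
\frac{\rmaxb(t)\pi(s,t)}{n_r}=\frac{\eps^2}{6}\cdot\frac{\pi(s,t)}{\pi'(s,t)}.
\end{align*}
Applying \autoref{lem:phase1_SSPPRA}, which gives $\pi'(s,t)\ge\frac{1}{2}\pi(s,t)$ for $t\in C$, yields $\pi(s,t)/\pi'(s,t)\le 2$ and therefore $\Var[\epi_i(s,t)]\le\frac{\eps^2}{3}$.

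The main obstacle I anticipate is the bookkeeping at the boundary between the loop and \autoref{line:AbsPPR_phase2_end}: I must verify that the value of $\rmaxb(t)$ passed to the final \texttt{BackwardPush} call is the one preserving the ratio $\rmaxb(t)/n_r=\eps^2/(6\pi'(s,t))$ — in particular, that early termination inside the loop does not halve $\rmaxb(t)$ without also halving $n_r$ — and that the floor operation $\lfloor n_r/2\rfloor$ does not spoil the identity. A short argument handling the initial case when the loop exits on its very first iteration (so no halving has occurred) and treating the floor by checking that we always have $n_r\ge 1$ so the distortion is at most a constant factor should complete the proof.
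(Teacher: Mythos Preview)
Your proof is correct and reaches the same bound as the paper, but the route to the key inequality $\Var[\epi_i(s,t)]\le\frac{1}{n_r}\sum_{v\in V}\pi(s,v)\rb(v,t)^2$ differs. The paper treats $\sum_{v}\pipp(s,v)\rb(v,t)$ as a weighted sum of the \emph{negatively correlated} binomial estimators $\pipp(s,v)$ and bounds the variance by $\sum_{v}\rb(v,t)^2\Var[\pipp(s,v)]$, then substitutes $\Var[\pipp(s,v)]=\pi(s,v)(1-\pi(s,v))/n_r$. You instead rewrite the same sum as the empirical mean of the i.i.d.\ variables $Y_j=\rb(V_j,t)$ and compute the variance exactly as $\frac{1}{n_r}\Var[Y_1]\le\frac{1}{n_r}\E[Y_1^2]$. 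Your decomposition is more elementary (no appeal to negative correlation) and gives the exact variance rather than an upper bound at that step; the paper's approach would adapt more readily if the $\pipp(s,v)$ were obtained by a scheme where the per-node estimators are only negatively correlated rather than arising from a single categorical sample. From that point on the two proofs coincide: both pull out one factor $\rb(v,t)\le\rmaxb(t)$ and invoke the invariant \eqref{eqn:BP_invariant} to bound $\sum_{v}\pi(s,v)\rb(v,t)\le\pi(s,t)$. For the final numerical bound, the paper simply asserts $\rmaxb(t)=\frac{\eps^2 n_r}{6\pi'(s,t)}$ and plugs in $\pi'(s,t)\ge\frac{1}{2}\pi(s,t)$; your observation that the ratio $\rmaxb(t)/n_r$ is preserved through the halving loop is the justification the paper leaves implicit, and you are right to note that the floor in $\lfloor n_r/2\rfloor$ can only shrink $n_r$ relative to $\rmaxb(t)$ and hence perturbs the ratio by at most a constant factor.
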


\begin{proof}
We first calculate
\begin{align*}
    \Var\big[\epi_i(s,t)\big]=\Var\left[\epib(s,t)+\sum_{v\in V}\pipp(s,v)\rb(v,t)\right]=\Var\left[\sum_{v\in V}\pipp(s,v)\rb(v,t)\right].
\end{align*}
To bound this variance, we use the fact that the Monte Carlo estimators $\pi'(s,v)$'s are \textit{negatively correlated}, so the variance of their weighted sum is bounded by the sum of their weighted variances:
\begin{align*}
    \Var\big[\epi_i(s,t)\big]\le\sum_{v\in V}\Var\big[\pipp(s,v)\rb(v,t)\big]=\sum_{v\in V}\big(\rb(v,t)\big)^2\Var\big[\pipp(s,v)\big].
\end{align*}
Next, we plug in $\Var\big[\pipp(s,v)\big]=\pi(s,v)\big(1-\pi(s,v)\big)\big/n_r$, the variances of binomial random variables divided by $n_r$, to obtain
\begin{align*}
    \Var\big[\epi_i(s,t)\big]\le&\sum_{v\in V}\big(\rb(v,t)\big)^2\cdot\frac{\pi(s,v)\big(1-\pi(s,v)\big)}{n_r} \\
    =&\frac{1}{n_r}\sum_{v\in V}\rb(v,t)\big(\pi(s,v)\rb(v,t)\big)\big(1-\pi(s,v)\big)\le\frac{1}{n_r}\sum_{v\in V}\rb(v,t)\big(\pi(s,v)\rb(v,t)\big).
\end{align*}
Using the properties of Backward Push (see \autoref{sec:BP}) that $\rb(v,t)\le\rmaxb(t)$ for all $v\in V$ and $\sum_{v\in V}\pi(s,v)\rb(v,t)=\pi(s,t)-\epib(s,t)\le\pi(s,t)$, we have
\begin{align*}
    \Var\big[\epi_i(s,t)\big]\le\frac{1}{n_r}\cdot\rmaxb(t)\sum_{v\in V}\pi(s,v)\rb(v,t)\le\frac{1}{n_r}\cdot\rmaxb(t)\pi(s,t).
\end{align*}
This proves the first part of the lemma.
Finally, by the setting of $\rmaxb(t)=\frac{\eps^{2}n_r}{6\pi'(s,t)}$ and the result in \autoref{lem:phase1_SSPPRA} that $\pi'(s,t)\ge\frac{1}{2}\pi(s,t)$ for candidate nodes $t\in C$, we obtain
\begin{align*}
    \Var\big[\epi_i(s,t)\big]\le\frac{1}{n_r}\cdot\frac{\eps^{2}n_r}{6\pi'(s,t)}\cdot\pi(s,t)\le\frac{1}{n_r}\cdot\frac{\eps^{2}n_r}{3\pi(s,t)}\cdot\pi(s,t)=\frac{1}{3}\eps^2.
\end{align*}
We conclude that $\Var\big[\epi_i(s,t)\big]\le\frac{1}{3}\eps^2$, as claimed.
\end{proof}

\noindent Now, we prove the following theorem, which verifies the correctness of \autoref{alg:AbsPPR}:

\begin{theorem} \label{thm:AbsPPR_correct}
\Autoref{alg:AbsPPR} answers the \SSPPRA query (defined in \autoref{def:SSPPRA}) correctly with probability at least $1-1/n$.
\end{theorem}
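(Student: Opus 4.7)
The plan is to condition on the success of Phase~\RomanNumeralCaps{1} (as defined after \autoref{lem:MC_SSPPRA}) and then handle the non-candidate and candidate cases separately, using the variance bound together with Chebyshev's inequality and the median trick. By \autoref{lem:MC_SSPPRA}, Phase~\RomanNumeralCaps{1} succeeds with probability at least $1-1/n^2$. Throughout the rest of the analysis I would treat the outcomes of Phase~\RomanNumeralCaps{1} as fixed; note that Phase~\RomanNumeralCaps{2} is then deterministic (both \texttt{BackwardPush} and the cost-based termination rule are deterministic), so the $n_t$ batches of random walks in Phase~\RomanNumeralCaps{3} are mutually independent, which is what legitimizes applying the median trick.

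For non-candidate nodes $t'\notin C$, the algorithm returns $\epi(s,t')=0$ (it only writes to indices in $C$, and the sparse output convention treats the rest as zero). By \autoref{lem:phase1_SSPPRA}, $\pi(s,t')\le\eps$, so $|\epi(s,t')-\pi(s,t')|=\pi(s,t')\le\eps$ holds deterministically. For candidate nodes $t\in C$, \autoref{lem:AbsPPR_unbias} gives $\E[\epi_i(s,t)]=\pi(s,t)$ and \autoref{lem:AbsPPR_variance} gives $\Var[\epi_i(s,t)]\le\frac{1}{3}\eps^2$. Chebyshev's inequality then yields
\begin{align*}
\Pr\big[|\epi_i(s,t)-\pi(s,t)|>\eps\big]\le\frac{\Var[\epi_i(s,t)]}{\eps^2}\le\frac{1}{3},
\end{align*}
so each individual Phase~\RomanNumeralCaps{3} estimate is $\eps$-accurate with probability at least $2/3$.

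Next I would invoke the median trick (\autoref{sec:median_trick}) with $n_t=\lceil 18\ln(2n^2)\rceil$ independent trials to boost the success probability for each candidate. The standard Chernoff-based bound shows that the median of $n_t$ i.i.d.\ estimators, each $\eps$-accurate with probability at least $2/3$, is itself $\eps$-accurate with probability at least $1-\exp(-n_t/18)\ge 1-1/(2n^2)$. A union bound over the at most $n$ candidate nodes then gives failure probability at most $n\cdot\frac{1}{2n^2}=\frac{1}{2n}$ for Phase~\RomanNumeralCaps{3}. Combining this with the $1/n^2$ failure of Phase~\RomanNumeralCaps{1} via a final union bound yields total failure probability at most $\frac{1}{n^2}+\frac{1}{2n}\le\frac{1}{n}$ for $n\ge 2$, completing the proof.

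I do not expect a serious obstacle; the heavy lifting is already done by \autoref{lem:AbsPPR_variance}, and the argument is essentially a careful bookkeeping of three independent failure sources (Phase~\RomanNumeralCaps{1} accuracy, per-candidate median concentration, union bound over $C$). The only subtlety worth flagging in the write-up is the need to explicitly note that Phase~\RomanNumeralCaps{2} is deterministic given Phase~\RomanNumeralCaps{1}, so that the $n_t$ batches of random walks in Phase~\RomanNumeralCaps{3}, which share the same $\rmaxb(t)$ and $n_r$, still yield independent estimators $\epi_i(s,t)$ as required by the median trick.
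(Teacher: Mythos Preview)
Your proposal is correct and follows essentially the same approach as the paper: split into non-candidate and candidate cases via \autoref{lem:phase1_SSPPRA}, apply Chebyshev using the variance bound from \autoref{lem:AbsPPR_variance}, boost with the median trick to per-node failure $1/(2n^2)$, union-bound over $|C|\le n$, and combine with the Phase~\RomanNumeralCaps{1} failure. The only cosmetic difference is that the paper writes the final combination as $(1-1/n^2)(1-1/(2n))>1-1/n$ rather than the union bound $1/n^2+1/(2n)\le 1/n$, and your explicit note that Phase~\RomanNumeralCaps{2} is deterministic (so the $n_t$ batches are independent) is a clarification the paper leaves implicit.
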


\begin{proof}
First, for non-candidate nodes $t'\notin C$, \autoref{lem:phase1_SSPPRA} guarantees that $\pi(s,t')\le\eps$, so it is acceptable that \autoref{alg:AbsPPR} returns $\epi(s,t')=0$ as their PPR estimates.
For candidate nodes $t\in C$, \autoref{lem:AbsPPR_unbias} together with Chebyshev's inequality guarantees that
\begin{align*}
    \Pr\Big[\big\lvert\epi_i(s,t)-\pi(s,t)\big\rvert\ge\eps\Big]\le\frac{\Var\big[\epi_i(s,t)\big]}{\eps^2}\le\frac{1}{3}.
\end{align*}
Recall that \autoref{alg:AbsPPR} sets the final estimates $\epi(s,t)$ to be $\median_{i=1}^{n_t}\big\{\epi_i(s,t)\big\}$, where $n_t=\left\lceil18\ln\left(2n^2\right)\right\rceil$, and that $\epi_i(s,t)$ for $1\le i\le n_t$ are obtained from independent trials of $n_r$ random walk samplings.
Thus, by applying the median trick (\autoref{thm:median_trick}), we know that for any $t\in C$, the probability that $\big|\epi(s,t)-\pi(s,t)\big|\ge\eps$ is at most $1\big/\left(2n^2\right)$.
Since $|C|\le n$, by applying union bound for all $t\in C$, we prove that with probability at least $1-1/(2n)$, $\big|\epi(s,t)-\pi(s,t)\big|\le\eps$ holds for all $t\in C$, matching the error bound required in \autoref{def:SSPPRA}.
Lastly, recall that this probability is conditioned on the success of Phase~\RomanNumeralCaps{1}, whose probability is at least $1-1/n^2$ by \autoref{lem:MC_SSPPRA}.
Thus, we conclude that the overall success probability is at least $\left(1-1/n^2\right)\big(1-1/(2n)\big)>1-1/n$.
\end{proof}

\paragraph*{Complexity Analysis}
First, we formalize the claim given in \autoref{sec:AbsPPR_techniques} regarding the balance between Phases~\RomanNumeralCaps{2} and \RomanNumeralCaps{3}, as follows.
We will prove the claim in \autoref{sec:deferred_proofs}.
\begin{claim} \label{claim:n_r}
\autoref{alg:AbsPPR} achieves the asymptotic complexity as if the optimal $n_r$ (in terms of balancing the complexities of Phases~\RomanNumeralCaps{2} and \RomanNumeralCaps{3}) is previously known and Adaptive Backward Push is only performed once with this $n_r$.
\end{claim}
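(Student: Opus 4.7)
The plan is to identify the ``ideal'' parameter $n_r^*$ at which the complexities of Phases~\RomanNumeralCaps{2} and \RomanNumeralCaps{3} balance, and then show that the halving schedule in \autoref{alg:AbsPPR} pays only a constant-factor overhead compared with running the algorithm once with this oracle-known $n_r^*$. Writing $T_{BP}(n_r)$ for the total cost of Backward Push across $t\in C$ when $\rmaxb(t)=\eps^{2}n_r/(6\pi'(s,t))$, and $T_{MC}(n_r)$ for the expected cost of the $n_t$ rounds of Monte Carlo in Phase~\RomanNumeralCaps{3} with $n_r$ walks each, the complexity bound \autoref{eqn:BP_complexity} gives $T_{BP}(n_r)=\Theta\bigl(B/(\eps^{2}n_r)\bigr)$ for a graph- and source-dependent quantity $B$, whereas $T_{MC}(n_r)=\Theta(n_{t}n_r)$. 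Since one is decreasing and the other increasing, they meet at a unique $n_r^*$, at which both equal $\Theta(T(n_r^*))$; this is exactly the oracle bound.

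I would next analyze the halving loop. In iteration $k$ the current parameter is $n_r^{(k)}=n_r^{(0)}/2^{k}$, and the attempted Backward Push uses $\tfrac{1}{2}\rmaxb(t)$, which is precisely the value of $\rmaxb$ corresponding to $n_r^{(k+1)}=n_r^{(k)}/2$. The stopping rule compares the realized cost of this attempt against the Phase~\RomanNumeralCaps{3} budget $T_{MC}(n_r^{(k+1)})$ of the halved configuration. Since a completed BP would cost $T_{BP}(n_r^{(k+1)})$, the loop continues as long as $T_{BP}(n_r^{(k+1)})\le T_{MC}(n_r^{(k+1)})$, i.e.\ $n_r^{(k+1)}\ge n_r^*$, and halts in the first iteration $K$ with $n_r^{(K+1)}<n_r^*$. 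Consequently the surviving parameter satisfies $n_r^{(K)}\in[n_r^*,2n_r^*)$, and this is the value used in \autoref{line:AbsPPR_phase2_end} and Phase~\RomanNumeralCaps{3}.

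Finally, I would bound the total cost by summing four contributions. (i) For each $k<K$ the attempted BP completes, paying $T_{BP}(n_r^{(k+1)})=\Theta\bigl(2^{k+1}B/(\eps^{2}n_r^{(0)})\bigr)$; these terms grow geometrically in $k$, so the sum is dominated by its last summand $T_{BP}(n_r^{(K)})\le T_{BP}(n_r^*)=O(T(n_r^*))$. (ii) The terminating iteration $K$ itself is capped at its threshold $T_{MC}(n_r^{(K+1)})\le T_{MC}(n_r^*)=O(T(n_r^*))$. (iii) The Backward Push in \autoref{line:AbsPPR_phase2_end} uses $\rmaxb(t)$ corresponding to $n_r^{(K)}\ge n_r^*$, so it costs $T_{BP}(n_r^{(K)})\le T_{BP}(n_r^*)=O(T(n_r^*))$. (iv) Phase~\RomanNumeralCaps{3} costs $T_{MC}(n_r^{(K)})<T_{MC}(2n_r^*)=O(T(n_r^*))$. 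Summing yields a total of $O(T(n_r^*))$, matching the oracle bound.

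The main obstacle is the geometric-sum step in (i): one must justify carefully that the growing BP costs across iterations are dominated by the last completed iteration rather than accumulating, and in particular that early iterations where $\rmaxb(t)\ge 1$ leaves Backward Push vacuous are handled in the accounting. Two smaller subtleties deserve care. First, the stopping check compares realized cost with expected cost; this can be reconciled either by inflating the threshold by a constant and invoking a tail bound, or by appealing to the worst-case form of \autoref{eqn:BP_complexity} which is already deterministic. Second, one needs $n_r^{(0)}\ge n_r^*$ so that the halving begins above the optimum, which follows from the crude bound $B\le O(m)$ together with $n_r^*=O(\sqrt{m}/\eps)\le n/\eps=n_r^{(0)}$.
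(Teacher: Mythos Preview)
Your overall strategy matches the paper's, but there is a genuine gap in the assertion $n_r^{(K)}\in[n_r^*,2n_r^*)$. You obtain the upper half correctly: since the terminating iteration's realized Backward Push cost exceeds $T_{MC}\bigl(n_r^{(K+1)}\bigr)$ and is at most the bound $C_1/n_r^{(K+1)}$ from \autoref{eqn:BP_complexity}, one gets $n_r^{(K+1)}<n_r^*$, hence $n_r^{(K)}<2n_r^*$. But the lower half $n_r^{(K)}\ge n_r^*$ does not follow. Your equivalence ``the loop continues as long as $T_{BP}(n_r^{(k+1)})\le T_{MC}(n_r^{(k+1)})$, i.e.\ $n_r^{(k+1)}\ge n_r^*$'' treats $T_{BP}(n_r)$ as $\Theta\bigl(B/(\eps^{2}n_r)\bigr)$, whereas \autoref{eqn:BP_complexity} is only an $O(\cdot)$ upper bound. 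If the realized Backward Push cost is well below this bound, the loop can keep halving past $n_r^*$, and then your steps~(i) and~(iii)---which bound the last completed and the final Backward Push by $T_{BP}(n_r^*)$ via monotonicity and $n_r^{(K)}\ge n_r^*$---are unjustified: the bound $C_1/n_r^{(K)}$ may far exceed $\sqrt{C_1C_2}$.

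The paper closes exactly this gap by distinguishing the realized costs $T^{(1)},T^{(2)}$ from the upper bound $C_1/n_r$ and doing a case split on $n_r^{(2)}$. When $n_r^{(2)}\ge n_r^*$ it argues as you do; when $n_r^{(2)}<n_r^*$ it instead uses that the preceding iteration completed, so $T^{(2)}\le C_2\cdot n_r^{(2)}<\sqrt{C_1C_2}$, and bounds the final Backward Push by its \emph{realized} cost rather than by $C_1/n_r^{(2)}$. The geometric sum in your step~(i) likewise needs extra care in that regime: the paper caps each early iteration by the realized-cost threshold $C_2\cdot n_r^{(2)}$ for at most $\log_2\eta$ terms (with $\eta=C_1\big/\bigl(C_2(n_r^{(2)})^2\bigr)$) and bounds the remaining terms geometrically. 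Your closing remark about reconciling realized versus expected cost concerns the Monte Carlo side and is a different (and minor) point; the issue here is realized Backward Push cost versus its deterministic upper bound, which ``appealing to the worst-case form of \autoref{eqn:BP_complexity}'' does not resolve, since that form is one-sided.
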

Based on \autoref{claim:n_r}, we can derive the complexity of \autoref{alg:AbsPPR} on general directed graphs:

\begin{theorem} \label{thm:AbsPPR_complexity_directed}
The expected time complexity of \autoref{alg:AbsPPR} on directed graphs is
\[
\tO\left(\frac{1}{\eps}\sqrt{\sum_{t\in V}\pi(s,t)\sum_{v\in V}\pi(v,t)\din(v)}\right).
\]
Furthermore, this complexity is upper bounded by $\tO\big(\sqrt{m}/\eps\big)$.
\end{theorem}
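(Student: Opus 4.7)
The plan is to bound the cost of each of the three phases separately, and then appeal to \autoref{claim:n_r} to argue that the algorithm attains the cost that would arise from an optimal (oracle) choice of $n_r$ balancing Phases~\RomanNumeralCaps{2} and \RomanNumeralCaps{3}.

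First I would observe that Phase~\RomanNumeralCaps{1} simulates $\tO(1/\eps)$ $\alpha$-discounted random walks, each of expected length $O(1)$, so its expected cost is $\tO(1/\eps)$ and therefore negligible. Next, I would analyse Phase~\RomanNumeralCaps{2} under a single hypothetical run with parameter $n_r$: by the Backward Push complexity bound \autoref{eqn:BP_complexity} together with the choice $\rmaxb(t) = \eps^{2}n_r/(6\pi'(s,t))$, the cost of Backward Push for a single candidate $t\in C$ is $O\bigl(\pi'(s,t)\sum_v \pi(v,t)\din(v)/(\eps^2 n_r)\bigr)$; plugging in $\pi'(s,t)\le \tfrac{3}{2}\pi(s,t)$ from \autoref{lem:phase1_SSPPRA} and summing over $t\in C$, the total Phase~\RomanNumeralCaps{2} cost is
\[
O\!\left(\frac{1}{\eps^2 n_r}\sum_{t\in C}\pi(s,t)\sum_{v\in V}\pi(v,t)\din(v)\right).
\]
Phase~\RomanNumeralCaps{3} performs $n_t = O(\log n)$ Monte Carlo rounds with $n_r$ walks each, for expected cost $\tO(n_r)$.

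Equating the Phase~\RomanNumeralCaps{2} and Phase~\RomanNumeralCaps{3} expressions and solving for $n_r$ yields the balancing value $n_r^\star = \Theta(\sqrt{S}/\eps)$, where $S := \sum_{t\in C}\pi(s,t)\sum_{v\in V}\pi(v,t)\din(v)$, giving combined expected cost $\tO(\sqrt{S}/\eps)$. By \autoref{claim:n_r}, \autoref{alg:AbsPPR} matches this asymptotic cost even though it does not know $n_r^\star$ in advance. Enlarging the index set $C$ to $V$ then yields the first claimed bound. To derive the $\tO(\sqrt{m}/\eps)$ upper bound, I would combine the crude estimate $\pi(s,t)\le 1$ with the identity $\sum_{t\in V}\pi(v,t)=1$ (every $\alpha$-discounted walk terminates somewhere), which after swapping the order of summation gives
\[
\sum_{t\in V}\pi(s,t)\sum_{v\in V}\pi(v,t)\din(v)\;\le\;\sum_{v\in V}\din(v)\sum_{t\in V}\pi(v,t)\;=\;\sum_{v\in V}\din(v)\;=\;m.
\]

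The main obstacle, in my view, is \emph{not} the cost accounting itself (which reduces to a textbook AM--GM style balancing), but rather legitimising the adaptive doubling/halving loop of Phase~\RomanNumeralCaps{2} as a faithful substitute for an oracle that hands us $n_r^\star$; this is precisely what \autoref{claim:n_r} must absorb, and any direct analysis of the loop would require a geometric-series argument to bound the cost of the discarded iterations and to argue that the loop stops within a constant factor of $n_r^\star$. A secondary issue is that the Phase~\RomanNumeralCaps{2} bound above is conditional on the success event of Phase~\RomanNumeralCaps{1} (needed to invoke \autoref{lem:phase1_SSPPRA}); since the failure event has probability at most $1/n^2$ and the worst-case cost of the algorithm is polynomial in $n$ and $1/\eps$, this contributes only an additive $o(1)$ to the expected running time and does not affect the stated bound.
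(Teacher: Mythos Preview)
Your proposal is correct and mirrors the paper's proof almost exactly: bound each phase, balance Phases~\RomanNumeralCaps{2} and~\RomanNumeralCaps{3} via AM--GM, invoke \autoref{claim:n_r}, and then simplify to $\tO(\sqrt{m}/\eps)$. The only cosmetic differences are that the paper derives $S\le m$ via $\pi(v,t)\le 1$ together with $\sum_{t}\pi(s,t)=1$ (the dual of your route using $\pi(s,t)\le 1$ and $\sum_{t}\pi(v,t)=1$), and it handles the Phase~\RomanNumeralCaps{1} failure event by an explicit fallback to Power Method once the cost reaches $\Theta(n^2)$, rather than asserting a polynomial worst-case bound outright.
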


\begin{proof}
First, the expected complexity of Phase~\RomanNumeralCaps{1} is $\tO(1/\eps)$, which is negligible in the overall complexity.
In Phase~\RomanNumeralCaps{2}, Backward Push is invoked with parameter $\rmaxb(t)=\frac{\eps^{2}n_r}{6\pi'(s,t)}$ for each $t\in C$, where $\pi'(s,t)$ satisfies $\pi'(s,t)\le\frac{3}{2}\pi(s,t)$ by \autoref{lem:phase1_SSPPRA}.
Therefore, using the complexity of Backward Push (\autoref{eqn:BP_complexity}), the complexity of Phase~\RomanNumeralCaps{2} is bounded by
\begin{align}
    \nonumber &O\left(\sum_{t\in C}\frac{\sum_{v\in V}\pi(v,t)\din(v)}{\rmaxb(t)}\right)=O\left(\sum_{t\in C}\sum_{v\in V}\frac{\pi'(s,t)\pi(v,t)\din(v)}{\eps^{2}n_r}\right) \\
    \nonumber \le&O\left(\sum_{t\in C}\sum_{v\in V}\frac{\pi(s,t)\pi(v,t)\din(v)}{\eps^{2}n_r}\right)\le O\left(\sum_{t\in V}\sum_{v\in V}\frac{\pi(s,t)\pi(v,t)\din(v)}{\eps^{2}n_r}\right) \\
    =&O\left(\frac{1}{\eps^{2}n_r}\sum_{t\in V}\pi(s,t)\sum_{v\in V}\pi(v,t)\din(v)\right). \label{eqn:complexity_phase2}
\end{align}
On the other hand, Phase~\RomanNumeralCaps{3} performs $n_t\cdot n_r=\left\lceil18\ln\left(2n^2\right)\right\rceil\cdot n_r=\tO(n_r)$ random walks, so the total complexity of Phases~\RomanNumeralCaps{2} and \RomanNumeralCaps{3} is $\tO\left(1/\left(\eps^{2}n_r\right)\cdot\sum_{t\in V}\pi(s,t)\sum_{v\in V}\pi(v,t)\din(v)+n_r\right)$.
By the AM–GM inequality, the optimal setting of $n_r$  minimizes this bound to be
\begin{align*}
    \tO\left(\sqrt{n_r\cdot\frac{1}{\eps^{2}n_r}\sum_{t\in V}\pi(s,t)\sum_{v\in V}\pi(v,t)\din(v)}\right)=\tO\left(\frac{1}{\eps}\sqrt{\sum_{t\in V}\pi(s,t)\sum_{v\in V}\pi(v,t)\din(v)}\right).
\end{align*}
By \autoref{claim:n_r}, \autoref{alg:AbsPPR} achieves this complexity.

In order for a simple upper bound, we use $\pi(v,t)\le1$ for all $v,t\in V$ to obtain
\begin{align*}
    &\tO\left(\frac{1}{\eps}\sqrt{\sum_{t\in V}\pi(s,t)\sum_{v\in V}\pi(v,t)\din(v)}\right)\le\tO\left(\frac{1}{\eps}\sqrt{\sum_{t\in V}\pi(s,t)\sum_{v\in V}\din(v)}\right) \\
    =&\tO\left(\frac{1}{\eps}\sqrt{\sum_{t\in V}\pi(s,t)\cdot m}\right)=\tO\left(\frac{\sqrt{m}}{\eps}\right),
\end{align*}
as claimed.

A subtle technicality here is that these complexity bounds are conditioned on the success of Phase~\RomanNumeralCaps{1}, and the expected complexity of Phase~\RomanNumeralCaps{2} may be unbounded if Phase~\RomanNumeralCaps{1} fails.
To resolve this technical issue, we can switch to using \naive Power Method~\cite{brin1998anatomy} once the actual cost of the algorithm reaches $\Theta\left(n^2\right)$.
In this way, even if Phase~\RomanNumeralCaps{1} fails, the algorithm will solve the query in $\tO\left(n^2\right)$ time.
As the failure probability of Phase~\RomanNumeralCaps{1} is at most $1/n^2$ (\autoref{lem:MC_SSPPRA}), this merely adds a term of $1/n^2\cdot\tO\left(n^2\right)=\tO(1)$ to the overall expected complexity, and thus does not affect the resultant bounds.
We will omit this technicality in later proofs.
\end{proof}

\noindent Next, we analyze the complexity of \autoref{alg:AbsPPR} on general undirected graphs. To this end, the following previously known symmetry theorem will be helpful.

\begin{theorem}[Symmetry of PPR on Undirected Graphs~\protect{\cite[Lemma~1]{avrachenkov2013choice}}] \label{thm:symmetry}
For all nodes $u,v\in V$, we have $\pi(u,v)d(u)=\pi(v,u)d(v)$.
\end{theorem}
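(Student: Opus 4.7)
The plan is to expand the PPR value as a weighted sum over ordinary random walks of all lengths, and then exploit the time-reversibility of the simple random walk on an undirected graph to pair each length-$\ell$ walk from $u$ to $v$ with its reverse from $v$ to $u$. This is the standard proof strategy for the symmetry identity, and it only needs a couple of lines of bookkeeping.

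First, I would start from the random-walk interpretation of PPR. Conditioning on the (geometric) length of the $\alpha$-discounted walk, one has
\begin{align*}
\pi(u,v)=\sum_{\ell=0}^{\infty}\alpha(1-\alpha)^{\ell}p_{\ell}(u,v),
\end{align*}
where $p_\ell(u,v)$ is the $\ell$-step transition probability of the simple (non-discounted) walk on $G$ from $u$ to $v$. The same expansion holds with $u$ and $v$ swapped.

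Next comes the core step: proving the $\ell$-step reversibility identity $d(u)\,p_{\ell}(u,v)=d(v)\,p_{\ell}(v,u)$ for every $\ell\ge 0$. I would do this path-by-path. Writing $p_\ell(u,v)=\sum_{x_0=u,\,x_\ell=v}\prod_{i=0}^{\ell-1}\frac{1}{d(x_i)}$, the contribution of a single path $u=x_0,x_1,\ldots,x_\ell=v$ multiplied by $d(u)=d(x_0)$ collapses to $\prod_{i=1}^{\ell-1}\frac{1}{d(x_i)}$. Because the graph is undirected, the reverse sequence $v=x_\ell,x_{\ell-1},\ldots,x_0=u$ is a valid path as well, and its contribution to $p_\ell(v,u)$ multiplied by $d(v)=d(x_\ell)$ collapses to the same product $\prod_{i=1}^{\ell-1}\frac{1}{d(x_i)}$. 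Summing this path-by-path equality over all length-$\ell$ paths gives the $\ell$-step identity; the cases $\ell=0$ and $\ell=1$ are trivial sanity checks.

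Finally, I would multiply the $\ell$-step identity by $\alpha(1-\alpha)^\ell$ and sum over $\ell\ge 0$ to obtain $d(u)\pi(u,v)=d(v)\pi(v,u)$. There is no real obstacle here: the only thing to be slightly careful about is the index bookkeeping when cancelling $d(x_0)$ or $d(x_\ell)$ from the product, and checking that the interchange of summations (over $\ell$ and over paths) is justified, which follows from absolute convergence since all terms are nonnegative and $\sum_\ell \alpha(1-\alpha)^\ell=1$.
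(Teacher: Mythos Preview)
Your argument is correct: the geometric expansion of PPR in terms of ordinary $\ell$-step transition probabilities, combined with the path-reversal identity $d(u)\,p_\ell(u,v)=d(v)\,p_\ell(v,u)$ for the simple random walk on an undirected graph, cleanly yields the symmetry. The bookkeeping is sound, and the nonnegativity of all terms indeed justifies interchanging the sums.

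As for comparison with the paper: the paper does not prove this theorem at all. It is quoted as a known result from \cite{avrachenkov2013choice} and used as a black box in the complexity analyses (e.g., to simplify the directed-graph bound into the undirected one). So there is no ``paper's own proof'' to compare against; your write-up simply supplies the standard reversibility argument that the paper omits.
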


\noindent Now, we can derive a better complexity bound of \autoref{alg:AbsPPR} on undirected graphs.

\begin{theorem} \label{thm:AbsPPR_complexity_undirected}
The expected time complexity of \autoref{alg:AbsPPR} on undirected graphs is
\[
\tO\left(\frac{1}{\eps}\sqrt{\sum_{t\in V}\pi(s,t)d(t)}\right).
\]
Furthermore, this complexity is upper bounded by $\tO\big(\sqrt{\dmax}/\eps\big)$.
\end{theorem}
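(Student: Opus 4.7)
The plan is to derive the undirected bound directly from the directed bound in \autoref{thm:AbsPPR_complexity_directed} by exploiting the symmetry property of PPR on undirected graphs (\autoref{thm:symmetry}). Since \autoref{alg:AbsPPR} is agnostic to whether the graph is directed or undirected, the bound
\[
\tO\left(\frac{1}{\eps}\sqrt{\sum_{t\in V}\pi(s,t)\sum_{v\in V}\pi(v,t)\din(v)}\right)
\]
still applies. On an undirected graph we have $\din(v)=d(v)$, so the inner sum becomes $\sum_{v\in V}\pi(v,t)d(v)$, which is precisely the type of expression \autoref{thm:symmetry} is designed to rewrite.

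The key manipulation is to apply symmetry inside the double sum: $\pi(v,t)d(v)=\pi(t,v)d(t)$. After this substitution the factor $d(t)$ depends only on the outer index, so it pulls out, and the remaining inner sum is $\sum_{v\in V}\pi(t,v)=1$ (total mass of an $\alpha$-discounted walk from $t$). This collapses the double sum to $\sum_{t\in V}\pi(s,t)d(t)$, giving the first claimed bound.

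For the second bound, I would simply upper bound $d(t)\le\dmax$ inside the sum and use $\sum_{t\in V}\pi(s,t)=1$ to conclude that $\sum_{t\in V}\pi(s,t)d(t)\le\dmax$, hence $\sqrt{\sum_{t\in V}\pi(s,t)d(t)}\le\sqrt{\dmax}$. Taking the square root and multiplying by $1/\eps$ produces the $\tO(\sqrt{\dmax}/\eps)$ bound.

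I do not anticipate any real obstacle here: this proof is essentially two lines once the directed bound and the symmetry theorem are in hand. The only subtlety worth noting explicitly is the technicality about conditioning on the success of Phase~\RomanNumeralCaps{1}, which was already addressed via the switch-to-Power-Method fallback at the end of \autoref{thm:AbsPPR_complexity_directed}; the same argument transfers verbatim to the undirected case, so there is no need to redo it.
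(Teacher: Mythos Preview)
Your proposal is correct and matches the paper's proof essentially line for line: the paper also starts from the directed bound of \autoref{thm:AbsPPR_complexity_directed}, applies \autoref{thm:symmetry} to replace $\pi(v,t)d(v)$ by $\pi(t,v)d(t)$, collapses the inner sum via $\sum_{v\in V}\pi(t,v)=1$, and then bounds $d(t)\le\dmax$ together with $\sum_{t\in V}\pi(s,t)=1$ for the second claim.
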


\begin{proof}
Using \autoref{thm:symmetry}, we can simplify the first complexity in \autoref{thm:AbsPPR_complexity_directed} as follows:
\begin{align*}
    &\tO\left(\frac{1}{\eps}\sqrt{\sum_{t\in V}\pi(s,t)\sum_{v\in V}\pi(v,t)d(v)}\right)=\tO\left(\frac{1}{\eps}\sqrt{\sum_{t\in V}\pi(s,t)\sum_{v\in V}\pi(t,v)d(t)}\right) \\
    =&\tO\left(\frac{1}{\eps}\sqrt{\sum_{t\in V}\pi(s,t)d(t)\sum_{v\in V}\pi(t,v)}\right)=\tO\left(\frac{1}{\eps}\sqrt{\sum_{t\in V}\pi(s,t)d(t)}\right).
\end{align*}
To prove the upper bound of $\tO\big(\sqrt{\dmax}/\eps\big)$, we use $d(t)\le\dmax$ for all $t\in V$ to obtain
\begin{align*}
\tO\left(\frac{1}{\eps}\sqrt{\sum_{t\in V}\pi(s,t)d(t)}\right)\le\tO\left(\frac{\sqrt{\dmax}}{\eps}\cdot\sqrt{\sum_{t\in V}\pi(s,t)}\right)=\tO\left(\frac{\sqrt{\dmax}}{\eps}\right).
\end{align*}
\end{proof}

\noindent We note that the bounds $\tO\big(\sqrt{m}/\eps\big)$ and $\tO\big(\sqrt{\dmax}/\eps\big)$ above are for worst-case graphs, and for power-law graphs (\autoref{assumption:power_law}), we can derive better bounds.
In \autoref{sec:deferred_proofs}, we prove the following lemma, which bounds $\sum_{t\in V}\big(\pi(v,t)\big)^2$ for any $v\in V$:

\begin{lemma} \label{lem:power_law_sum}
On a power-law graph, $\sum_{t\in V}\big(\pi(v,t)\big)^2=O\left(n^{2\gamma-2}\right)$ holds for any $v\in V$.
\end{lemma}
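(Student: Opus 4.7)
The plan is to apply the power-law assumption directly and reduce the statement to a convergent $p$-series. By \autoref{assumption:power_law}, when we sort the PPR values $\pi(v,t)$ for $t \in V$ in decreasing order, the $i$-th largest equals $\Theta\bigl(i^{-\gamma}/n^{1-\gamma}\bigr)$ for $1 \le i \le n$. Squaring preserves the ordering of nonnegative values, so I would write
\begin{align*}
\sum_{t\in V}\bigl(\pi(v,t)\bigr)^2 \;=\; \sum_{i=1}^{n}\Theta\!\left(\frac{i^{-\gamma}}{n^{1-\gamma}}\right)^{\!2} \;=\; \Theta\!\left(n^{2\gamma-2}\right)\sum_{i=1}^{n} i^{-2\gamma}.
\end{align*}

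The remaining step is to show the residual sum $\sum_{i=1}^n i^{-2\gamma}$ is $O(1)$. Since $\gamma \in \left(\frac{1}{2},1\right)$, we have $2\gamma > 1$, so the $p$-series with exponent $2\gamma$ converges. Concretely, I would bound it via the integral comparison $\sum_{i=1}^{n} i^{-2\gamma} \le 1 + \int_{1}^{\infty} x^{-2\gamma}\,\d x = 1 + \frac{1}{2\gamma-1}$, which is a finite constant depending only on $\gamma$ (treated as fixed throughout). Plugging this back gives $\sum_{t\in V}\bigl(\pi(v,t)\bigr)^2 = O\bigl(n^{2\gamma-2}\bigr)$, as claimed.

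There is no real obstacle here: the entire argument is a one-line reduction to a convergent $p$-series. The only subtlety worth flagging is that the bound crucially exploits $\gamma > \frac{1}{2}$ (otherwise the sum $\sum_{i=1}^n i^{-2\gamma}$ would grow polynomially in $n$, and we would lose the stated bound). This matches the range of $\gamma$ imposed by \autoref{assumption:power_law}, so the proof goes through uniformly for any $v \in V$.
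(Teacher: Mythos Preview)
Your proof is correct and follows essentially the same approach as the paper: invoke \autoref{assumption:power_law}, square the $\Theta\bigl(i^{-\gamma}/n^{1-\gamma}\bigr)$ estimate, and bound the resulting $\sum_{i=1}^n i^{-2\gamma}$ by an integral, using $2\gamma>1$ to ensure the bound is $O(1)$. The only cosmetic difference is that you factor out $n^{2\gamma-2}$ first and appeal to the convergent $p$-series, whereas the paper writes the integral comparison directly on the combined term; the substance is identical.
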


\noindent Now, we can bound the complexity on power-law graphs as follows:

\begin{theorem} \label{thm:AbsPPR_complexity_power_law}
The expected complexity of \autoref{alg:AbsPPR} on power-law graphs is $\tO\left(n^{\gamma-1/2}/\eps\right)$.
\end{theorem}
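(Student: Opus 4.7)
The plan is to specialize the general directed-graph complexity bound of \autoref{thm:AbsPPR_complexity_directed}, namely $\tO\bigl(\frac{1}{\eps}\sqrt{\sum_{t\in V}\pi(s,t)\sum_{v\in V}\pi(v,t)\din(v)}\bigr)$, under \autoref{assumption:power_law}, together with the standard fact (recalled in the caption of \autoref{tbl:bounds}) that $m=\tO(n)$ for power-law graphs. Because undirected edges can be treated as pairs of opposing directed edges with $\din(v)=d(v)$, this single directed bound covers both graph classes, so no separate undirected argument is needed.

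The core manipulation proceeds in three steps. First, I would swap the order of summation inside the square root to obtain
\[
\sum_{t\in V}\pi(s,t)\sum_{v\in V}\pi(v,t)\din(v)=\sum_{v\in V}\din(v)\sum_{t\in V}\pi(s,t)\pi(v,t),
\]
which isolates a coupling between the source distribution $\pi(s,\cdot)$ and the backward-target distribution $\pi(v,\cdot)$. Second, I would apply the Cauchy--Schwarz inequality to the inner sum, giving
\[
\sum_{t\in V}\pi(s,t)\pi(v,t)\le\sqrt{\sum_{t\in V}\bigl(\pi(s,t)\bigr)^{2}}\cdot\sqrt{\sum_{t\in V}\bigl(\pi(v,t)\bigr)^{2}}=O\bigl(n^{2\gamma-2}\bigr),
\]
where the last equality applies \autoref{lem:power_law_sum} to both $s$ and $v$. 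Third, because this bound on the inner sum is uniform in $v$, multiplying by $\din(v)$ and summing over $v\in V$ just contributes a factor of $\sum_{v\in V}\din(v)=m=\tO(n)$, yielding $\tO(n\cdot n^{2\gamma-2})=\tO(n^{2\gamma-1})$. Taking the square root and multiplying by $1/\eps$ yields the claimed $\tO(n^{\gamma-1/2}/\eps)$ bound.

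The step I expect to scrutinize most carefully is the Cauchy--Schwarz inequality: it looks superficially wasteful, but because \autoref{lem:power_law_sum} delivers the same $O(n^{2\gamma-2})$ bound for every node, the decoupling is tight up to constants and the resulting bound degrades only through the $m=\tO(n)$ factor that enters when we sum over $v$. A minor technicality carried over from \autoref{thm:AbsPPR_complexity_directed} is that this is an expected bound conditioned on the success of Phase~\RomanNumeralCaps{1}; it is handled identically via the Power Method fallback described there, adding only a $\tO(1)$ term to the overall expected cost and hence leaving the $\tO(n^{\gamma-1/2}/\eps)$ rate unaffected.
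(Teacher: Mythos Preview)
Your proposal is correct and follows essentially the same approach as the paper: swap the order of summation, apply Cauchy--Schwarz to the inner sum, invoke \autoref{lem:power_law_sum} for both $s$ and $v$, and then use $\sum_{v}\din(v)=m=\tO(n)$ to conclude. The paper's proof is line-for-line the same argument, so there is nothing to add.
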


\begin{proof}
We prove the theorem using \textit{Cauchy–Schwarz inequality} and power-law assumptions to simplify the first complexity given in \autoref{thm:AbsPPR_complexity_directed}.
First, reordering the summations yields
\begin{align*}
    \tO\left(\frac{1}{\eps}\sqrt{\sum_{t\in V}\pi(s,t)\sum_{v\in V}\pi(v,t)\din(v)}\right)=\tO\left(\frac{1}{\eps}\sqrt{\sum_{v\in V}\din(v)\sum_{t\in V}\pi(s,t)\pi(v,t)}\right),
\end{align*}
where
\[
\sum_{t\in V}\pi(s,t)\pi(v,t)\le\sqrt{\left(\sum_{t\in V}\big(\pi(s,t)\big)^2\right)\left(\sum_{t\in V}\big(\pi(v,t)\big)^2\right)}
\]
by Cauchy–Schwarz inequality.
By \autoref{lem:power_law_sum}, both $\sum_{t\in V}\big(\pi(s,t)\big)^2$ and $\sum_{t\in V}\big(\pi(v,t)\big)^2$ are bounded by $O\left(n^{2\gamma-2}\right)$.
Consequently, $\sum_{t\in V}\pi(s,t)\pi(v,t)\le O\left(n^{2\gamma-2}\right)$, and the expression in question can be bounded by
\begin{align*}
\tO\left(\frac{1}{\eps}\sqrt{\sum_{v\in V}\din(v)\cdot n^{2\gamma-2}}\right)=\tO\left(\frac{1}{\eps}\sqrt{m\cdot n^{2\gamma-2}}\right)=\tO\left(\frac{n^{\gamma-1/2}}{\eps}\right),
\end{align*}
where we used the fact that $m=\tO(n)$ on power-law graphs.
\end{proof}

\bibliography{paper}

\appendix

\section{Chernoff Bound} \label{sec:Chernoff}

\begin{theorem}[Chernoff Bound, derived from~\protect{\cite[Theorem~4.1 and Theorem~4.4]{chung2006survey}}] \label{thm:chernoff}
Let $X_i$ be independent random variables such that $0\le X_i\le r$ and $\E[X_i]=\mu$ for $1\le i\le k$, and let $\bar{X}=\frac{1}{k}\sum_{i=1}^{k}X_i$.
Then, for any $\lambda>0$,
\[
\Pr\Big[\left\lvert\bar{X}-\mu\right\rvert\ge\lambda\Big]\le2\exp\left(-\frac{\lambda^{2}k}{2r\left(\mu+\frac{1}{3}\lambda\right)}\right).
\]
\end{theorem}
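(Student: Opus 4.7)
The plan is to reduce the two-sided bound to two one-sided exponential tail bounds for the sum $S = \sum_{i=1}^{k} X_i$, apply the classical moment generating function (Cram\'er--Chernoff) technique to each, and then union-bound. Rewriting the event as $|S - k\mu| \geq \lambda k$, I would treat the upper tail $\Pr[S - k\mu \geq \lambda k]$ and the lower tail $\Pr[k\mu - S \geq \lambda k]$ separately; combining them contributes the factor of $2$ in the statement.

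For the upper tail, I would use that for any $s > 0$,
\[
\Pr[S - k\mu \geq \lambda k] \leq e^{-s\lambda k}\prod_{i=1}^{k}\E\bigl[e^{s(X_i - \mu)}\bigr],
\]
by Markov's inequality and independence. Using $0 \leq X_i \leq r$, I would bound each factor via the standard pointwise inequality that yields $\ln \E[e^{s(X_i-\mu)}] \leq \mu\bigl(e^{sr} - 1 - sr\bigr)/r$, giving an aggregate Bennett-type exponent $k\mu(e^{sr}-1-sr)/r - s\lambda k$. Optimizing over $s$ by the choice $s = r^{-1}\ln(1 + \lambda/\mu)$ produces the Bennett form $\exp\bigl(-k\mu\, h(\lambda/\mu)/r\bigr)$ with $h(x) = (1+x)\ln(1+x) - x$. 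A final elementary inequality $h(x) \geq x^{2}/(2 + 2x/3)$ for $x \geq 0$ relaxes this to the target $\exp\!\bigl(-\lambda^{2}k/(2r(\mu + \lambda/3))\bigr)$. The lower tail $\Pr[k\mu - S \geq \lambda k]$ is handled by the same procedure with $s < 0$, and the resulting exponent is no weaker than the upper-tail one, so the same bound applies.

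The main obstacle is the final analytic step, namely verifying the inequality relating $h(x)$ to the Bernstein-style quadratic denominator $2 + 2x/3$, since the rest is the routine Cram\'er--Chernoff machinery. This inequality is standard but not entirely trivial; I would either prove it in one line by checking that $g(x) := h(x)(2+2x/3) - x^{2}$ satisfies $g(0) = 0$ and $g'(x) \geq 0$ on $x \geq 0$, or appeal directly to the cited Theorems~4.1 and~4.4 of~\cite{chung2006survey}, which package exactly these upper- and lower-tail bounds. A small technicality is that Theorems~4.1 and~4.4 in the cited survey are typically stated for $\E[X_i] \leq \mu$ rather than equality; since our hypothesis is the equality case, both directly apply, and a union bound over the two tails completes the proof.
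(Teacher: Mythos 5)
Your derivation is correct: the union bound over the two tails, the MGF bound $\ln \E\bigl[e^{s(X_i-\mu)}\bigr]\le \mu\bigl(e^{sr}-1-sr\bigr)/r$ from convexity on $[0,r]$, the optimal choice $s=r^{-1}\ln(1+\lambda/\mu)$ giving the Bennett exponent, and the relaxation $h(x)\ge x^{2}/(2+2x/3)$ together yield exactly the stated bound, and the lower tail is indeed dominated since $h(-x)\ge x^{2}/2$. The paper itself gives no proof of this theorem --- it is imported verbatim as a consequence of Theorems~4.1 and~4.4 of the cited survey --- and your argument is precisely the standard Cram\'er--Chernoff/Bernstein machinery underlying those results, so there is nothing to compare beyond noting that your write-up supplies the derivation the paper omits.
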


\section{Median Trick} \label{sec:median_trick}

\begin{theorem}[Median Trick~\cite{jerrum1986random}] \label{thm:median_trick}
    Let $X_1,X_2,\dots,X_{n_t}$ be $n_t$ i.i.d. random variables such that $\Pr\big[\lvert X_i-\mu\rvert\ge\lambda\big]\le\frac{1}{3}$, where $\mu=\E[X_1]$, and let $X=\median_{1\le i\le n_t}\{X_i\}$.
    For a given probability $\pf$, if $n_t\ge18\ln(1/\pf)=\Theta\big(\log(1/\pf)\big)$, then $\Pr\Big[\big\lvert X-\mu\big\rvert\ge\lambda\Big]\le\pf$.
    Here, $\median_{1\le i\le n_t}\{X_i\}$ is defined as the $\left\lceil\frac{n_t}{2}\right\rceil$-th smallest element in $X_1,X_2,\dots,X_{n_t}$.
\end{theorem}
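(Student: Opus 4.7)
The plan is to reduce the event ``the median deviates from $\mu$ by at least $\lambda$'' to ``a majority of the $n_t$ samples individually deviate from $\mu$ by at least $\lambda$,'' and then bound the latter with a concentration inequality.

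First I would introduce indicator variables $Y_i=\indicator{|X_i-\mu|\ge\lambda}$ for $1\le i\le n_t$. By hypothesis, the $Y_i$'s are i.i.d.\ Bernoulli with common mean $p:=\Pr[|X_1-\mu|\ge\lambda]\le 1/3$. The key combinatorial step is the implication that $|X-\mu|\ge\lambda$ forces $\sum_{i=1}^{n_t}Y_i\ge\lceil n_t/2\rceil$. This follows from the definition of $X$ as the $\lceil n_t/2\rceil$-th smallest element: if $X\ge\mu+\lambda$, then at least $\lceil n_t/2\rceil$ of the $X_i$'s satisfy $X_i\ge X\ge\mu+\lambda$, and symmetrically if $X\le\mu-\lambda$; in either case those samples all contribute $Y_i=1$.

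Next, I would bound $\Pr\left[\sum_{i=1}^{n_t}Y_i\ge n_t/2\right]$. Writing $\bar Y=\frac{1}{n_t}\sum_i Y_i$ with $\E[\bar Y]=p\le 1/3$, the event becomes $\bar Y-p\ge 1/2-p$, and since $p\le 1/3$ we have $1/2-p\ge 1/6$. Applying the one-sided Hoeffding inequality for i.i.d.\ $[0,1]$-valued random variables yields
\[
\Pr\left[\bar Y-p\ge 1/6\right]\le\exp\left(-2n_t\cdot(1/6)^2\right)=\exp(-n_t/18),
\]
which is at most $\pf$ as soon as $n_t\ge 18\ln(1/\pf)$.

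The main obstacle, and the reason behind the specific constant $18$, is choosing the right concentration inequality: the Bernstein-style Chernoff bound stated in \autoref{sec:Chernoff} gives a strictly weaker constant for this Bernoulli-sum tail (roughly $\exp(-n_t/28)$ in the worst case), so one must instead use the Hoeffding form $\exp(-2nt^2)$, which is tight for bounded random variables. A minor sanity check: although $p$ may be much smaller than $1/3$, this only enlarges the deviation $1/2-p$ and therefore tightens the Hoeffding bound, so the worst case $p=1/3$ is what pins down the stated threshold.
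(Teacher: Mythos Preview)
The paper does not actually prove \autoref{thm:median_trick}; it is merely stated in \autoref{sec:median_trick} as a known tool with a citation to~\cite{jerrum1986random}, so there is no in-paper argument to compare against. Your proof is the standard one and is correct: the reduction ``median outside $[\mu-\lambda,\mu+\lambda]$ $\Rightarrow$ at least $\lceil n_t/2\rceil$ bad samples'' is valid under the paper's definition of the median, and the one-sided Hoeffding bound $\exp(-2n_t(1/6)^2)=\exp(-n_t/18)$ gives exactly the threshold $n_t\ge 18\ln(1/\pf)$.

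Your side remark about the constant is also accurate and worth noting: the Chernoff form in \autoref{thm:chernoff} would, at $p=1/3$ and deviation $1/6$, only yield $\exp(-n_t/28)$, so the specific constant $18$ in the statement genuinely relies on the Hoeffding form rather than the Bernstein-type bound the paper uses elsewhere. One tiny quibble: in the case $X\ge\mu+\lambda$, the number of samples $\ge X$ is $n_t-\lceil n_t/2\rceil+1=\lfloor n_t/2\rfloor+1$, which is $\ge\lceil n_t/2\rceil$ in all cases; your claim is correct but the ``symmetrically'' is not quite symmetric in the counting (the $X\le\mu-\lambda$ direction gives exactly $\lceil n_t/2\rceil$ samples $\le X$). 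This does not affect the argument.
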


\section{Deferred Proofs} \label{sec:deferred_proofs}

\begin{proof}[Proof of \autoref{lem:MC_SSPPRA}]
In Phase~\RomanNumeralCaps{1}, the algorithm runs Monte Carlo with $\left\lceil12\ln\left(2n^3\right)\big/\eps\right\rceil$ random walks.
We denote this value as $n_r'$.
Now we consider a fixed node $v\in V$ and define the indicator random variable $X_i(v)=\indicator{\text{the }i\text{-th random walk terminates at }v}$, for $1\le i\le n_r'$.
We know that $X_i(v)\in[0,1]$, $\E\big[X_i(v)\big]=\pi(s,v)$, $X_i(v)$'s for $1\le i\le n_r'$ are independent, and $\pi'(s,v)$ is set to be $\frac{1}{n_r'}\sum_{i=1}^{n_r'}X_i(v)$.
If $\pi(s,v)\ge\frac{1}{4}\eps$, applying \textit{Chernoff Bound} (\autoref{thm:chernoff}) for $X_i(v)$'s yields
\begin{align*}
    &\Pr\left[\big|\pi'(s,v)-\pi(s,v)\big|\ge\frac{1}{2}\pi(s,v)\right] \\
    \le&2\exp\left(-\frac{\big(\pi(s,v)\big)^{2}n_r'}{2\left(\pi(s,v)+\frac{1}{3}\cdot\frac{1}{2}\pi(s,v)\right)}\right)=2\exp\left(-\frac{3}{7}\pi(s,v)\cdot n_r'\right) \\
    \le&2\exp\left(-\frac{3}{7}\cdot\frac{1}{4}\eps\cdot\frac{12}{\eps}\ln\left(2n^3\right)\right)\le\frac{1}{n^3}.
\end{align*}
Otherwise, $\pi(s,v)<\frac{1}{4}\eps$, we accordingly have
\begin{align*}
    &\Pr\left[\big|\pi'(s,v)-\pi(s,v)\big|\ge\frac{1}{4}\eps\right] \\
    \le&2\exp\left(-\frac{\left(\frac{1}{4}\eps\right)^{2}n_r'}{2\left(\pi(s,v)+\frac{1}{3}\cdot\frac{1}{4}\eps\right)}\right)\le2\exp\left(-\frac{\left(\frac{1}{4}\eps\right)^{2}n_r'}{2\left(\frac{1}{4}\eps+\frac{1}{3}\cdot\frac{1}{4}\eps\right)}\right) \\
    =&2\exp\left(-\frac{3}{32}\eps\cdot n_r'\right)\le2\exp\left(-\frac{3}{32}\eps\cdot\frac{12}{\eps}\ln\left(2n^3\right)\right)\le\frac{1}{n^3}.
\end{align*}
We conclude that for a fixed $v\in V$, $\pi'(s,v)$ satisfies the claimed property with probability at least $1-1/n^3$.
Finally, applying union bound for all $v\in V$ finishes the proof.
\end{proof}

\begin{proof}[Proof of \autoref{lem:phase1_SSPPRA}]
By \autoref{lem:MC_SSPPRA}, we have $\frac{1}{2}\pi(s,v)\le\pi'(s,v)\le\frac{3}{2}\pi(s,v)$ for $v$ with $\pi(s,v)\ge\frac{1}{4}\eps$, and $\pi'(s,v)\le\pi(s,v)+\frac{1}{4}\eps$ for $v$ with $\pi(s,v)<\frac{1}{4}\eps$.
Also, recall that the candidate set $C$ is set to be $\left\{t:\pi'(s,t)>\frac{1}{2}\eps\right\}$.
Therefore, for any non-candidate node $t'\notin C$, we have $\pi'(s,t')\le\frac{1}{2}\eps$, and thus we must have $\pi(s,t')\le\eps$: otherwise, $\pi(s,t')>\eps$, which implies $\pi'(s,t')\ge\frac{1}{2}\pi(s,t')>\frac{1}{2}\eps$, giving a contradiction.
This proves the first part of the lemma.

For the second part, we note that if $\pi(s,t)<\frac{1}{4}\eps$, then $\pi'(s,t)\le\pi(s,t)+\frac{1}{4}\eps<\frac{1}{2}\eps$.
Therefore, $\pi'(s,t)>\frac{1}{2}\eps$ (i.e., $t\in C$) inversely implies $\pi(s,t)\ge\frac{1}{4}\eps$, so $\frac{1}{2}\pi(s,t)\le\pi'(s,t)\le\frac{3}{2}\pi(s,t)$ is guaranteed for all candidate nodes $t\in C$.
\end{proof}

\begin{proof}[Proof of \autoref{lem:AbsPPR_unbias}]
Recalling that \autoref{alg:AbsPPR} sets
\[
\epi_i(s,t)=\epib(s,t)+\sum_{v\in V}\pipp(s,v)\rb(v,t),
\]
the linearity of expectations leads to
\begin{align*}
    \E\big[\epi_i(s,t)\big]=\E\left[\epib(s,t)+\sum_{v\in V}\pipp(s,v)\rb(v,t)\right]=\epib(s,t)+\sum_{v\in V}\E\big[\pipp(s,v)\big]\rb(v,t).
\end{align*}
As $\pipp(s,v)$'s are Monte Carlo estimators for $\pi(s,v)$, they are unbiased estimators.
Besides, since $\epib(s,t)$ and $\rb(v,t)$'s are the results of Backward Push for $t$, the invariant of Backward Push (\autoref{eqn:BP_invariant}) implies $\pi(s,t)=\epib(s,t)+\sum_{v\in V}\pi(s,v)\rb(v,t)$.
Consequently,
\begin{align*}
\E\big[\epi_i(s,t)\big]=\epib(s,t)+\sum_{v\in V}\pi(s,v)\rb(v,t)=\pi(s,t).
\end{align*}
\end{proof}

\begin{claimproof}[Proof of \autoref{claim:n_r}]
In this proof, we denote the optimal value of $n_r$ as $n_r^{(*)}$, denote the final $n_r$ value determined by \autoref{alg:AbsPPR} as $n_r^{(2)}$, and let $n_r^{(1)}=\frac{1}{2}n_r^{(2)}$.
Note that we set the initial value of $n_r$ to be as large as $\lceil n/\eps\rceil$, ensuring that it is larger than the optimal $n_r^{(*)}$.
For simplicity, we ignore the $\polylog(n)$ factors and let $C_1$ and $C_2$ be numbers such that for any $n_r$, the cost of Adaptive Backward Push is bounded by $C_1/n_r$ and the expected cost of Monte Carlo is $C_2\cdot n_r$.
Also, let the actual cost of running Adaptive Backward Push with $n_r^{(1)}$ and $n_r^{(2)}$ be $T^{(1)}$ and $T^{(2)}$, respectively.
Clearly, $T^{(1)}\le C_1\big/n_r^{(1)}$, $T^{(2)}\le C_1\big/n_r^{(2)}$, $n_r^{(*)}=\sqrt{C_1/C_2}$ (we allow $n_r^{(*)}$ to be a real number for simplicity), and the optimal cost is bounded by $C_1\big/n_r^{(*)}+C_2\cdot n_r^{(*)}=2\sqrt{C_1C_2}$.
By the process of our algorithm, we have $T^{(2)}\le C_2\cdot n_r^{(2)}$ and $T^{(1)}>C_2\cdot n_r^{(1)}$.

Now we prove that $T^{(2)}+C_2\cdot n_r^{(2)}\le4\sqrt{C_1C_2}$, which implies that the cost of running Adaptive Backward Push once and Monte Carlo with $n_r^{(2)}$ is no more than twice the optimal cost.
By $T^{(1)}\le C_1\big/n_r^{(1)}$ and $T^{(1)}>C_2\cdot n_r^{(1)}$, we have $C_2\cdot n_r^{(1)}<C_1\big/n_r^{(1)}$, which leads to $n_r^{(1)}<\sqrt{C_1/C_2}$.
If $n_r^{(1)}\ge\frac{1}{2}\sqrt{C_1/C_2}$, then $\sqrt{C_1/C_2}\le n_r^{(2)}<2\sqrt{C_1/C_2}$, and thus $T^{(2)}+C_2\cdot n_r^{(2)}\le C_1\big/n_r^{(2)}+C_2\cdot n_r^{(2)}\le4\sqrt{C_1C_2}$.
Otherwise, $n_r^{(1)}<\frac{1}{2}\sqrt{C_1/C_2}$, then $C_2\cdot n_r^{(1)}<\frac{1}{2}\sqrt{C_1C_2}$, which implies $C_2\cdot n_r^{(2)}<\sqrt{C_1C_2}$.
In this case, we have $T^{(2)}+C_2\cdot n_r^{(2)}\le2C_2\cdot n_r^{(2)}<2\sqrt{C_1C_2}$.
Together, these arguments verify that $T^{(2)}+C_2\cdot n_r^{(2)}\le4\sqrt{C_1C_2}$.

Next, it remains to show that the wasted cost in Phase~\RomanNumeralCaps{2} is also bounded by $\sqrt{C_1C_2}$ times a constant.
First, note that when trying Adaptive Backward Push with $n_r^{(1)}$, we immediately terminate the process once its cost exceeds $C_2\cdot n_r^{(1)}$.
Thus, this part of the extra cost is smaller than $C_2\cdot n_r^{(2)}<4\sqrt{C_1C_2}$.
On the other hand, letting the initial value for $n_r$ be $2^k\cdot n_r^{(2)}$ for some positive integer $k$, the cost of performing Adaptive Backward Push for larger $n_r$ is bounded by $C_1\Big/\left(2n_r^{(2)}\right)+C_1\Big/\left(4n_r^{(2)}\right)+\cdots+C_1\Big/\left(2^k\cdot n_r^{(2)}\right)$.
This is bounded by $C_1\big/n_r^{(2)}$, so if $C_1\big/n_r^{(2)}\le2\sqrt{C_1C_2}$, we are done.
Otherwise, we have $C_1\big/n_r^{(2)}>2\sqrt{C_1C_2}$, which leads to $n_r^{(2)}<\frac{1}{2}\sqrt{C_1/C_2}$ and $C_2\cdot n_r^{(2)}<\frac{1}{2}\sqrt{C_1C_2}<\frac{1}{4}C_1\big/n_r^{(2)}$.
Letting $\eta$ be the ratio of $C_1\big/n_r^{(2)}$ to $C_2\cdot n_r^{(2)}$, we have $\eta>4$ and $C_2\cdot n_r^{(2)}=\sqrt{C_1C_2}/\sqrt{\eta}$.
Now, consider the actual cost of Adaptive Backward Push for $n_r=2n_r^{(2)},4n_r^{(2)},\dots,2^k\cdot n_r^{(2)}$.
For $n_r=2^i\cdot n_r^{(2)}$, if its corresponding cost is larger than $C_2\cdot n_r^{(2)}$, we bound it by $C_2\cdot n_r^{(2)}$ (note that the actual cost is bounded by $T^{(2)}\le C_2\cdot n_r^{(2)}$); otherwise, we bound it by $C_1\Big/\left(2^i\cdot n_r^{(2)}\right)$.
When summing up these bounds, there are at most $\log_2(\eta)$ terms of $C_2\cdot n_r^{(2)}$ for the first case, and for the second case, the sum is no more than $C_2\cdot n_r^{(2)}$.
Therefore, the sum of the actual cost of Adaptive Backward Push for $n_r=2n_r^{(2)},4n_r^{(2)},\dots,2^k\cdot n_r^{(2)}$ is bounded by $\big(\log_2(\eta)+1\big)C_2\cdot n_r^{(2)}=\big(\log_2(\eta)+1\big)/\sqrt{\eta}\cdot\sqrt{C_1C_2}<3\sqrt{C_1C_2}$.
This completes the proof.
\end{claimproof}

\begin{proof}[Proof of \autoref{lem:power_law_sum}]
By our power-law assumption (\autoref{assumption:power_law}), for any $v\in V$, the $i$-th largest PPR value w.r.t. $v$ equals $\Theta\left(\frac{i^{-\gamma}}{n^{1-\gamma}}\right)$, where $\gamma\in\left(\frac{1}{2},1\right)$.
Thus, the summation of the PPR values can be upper bounded by the following integration:
\begin{align*}
    \sum_{t\in V}\big(\pi(v,t)\big)^2&=\sum_{i=1}^{n}\left(\Theta\left(\frac{i^{-\gamma}}{n^{1-\gamma}}\right)\right)^2\le O\left(\int_{i=1}^{n}\left(\frac{i^{-2\gamma}}{n^{2-2\gamma}}\right)\d i\right)=O\left(n^{2\gamma-2}\right).
\end{align*}
\end{proof}

\section{Modifying Our Algorithm for the \SSPPRD Query} \label{sec:AbsPPR_SSPPRD}

Recall from \autoref{def:SSPPRD} that for the \SSPPRD query, the maximum acceptable absolute error for a node $t\in V$ is $\epsd\cdot d(t)$, instead of a fixed $\eps$ as in the \SSPPRA query.
Also recall that in \autoref{alg:AbsPPR}, when performing Adaptive Backward Push, we run Backward Push for each candidate node $t$ with $\rmaxb(t)=\frac{\eps^{2}n_r}{6\pi'(s,t)}$, which ensures that $\Var\big[\epi_i(s,t)\big]\le\frac{1}{3}\eps^2$ (\autoref{lem:AbsPPR_variance}).
For the \SSPPRD query, under the same framework, we only need to ensure that $\Var\big[\epi_i(s,t)\big]\le\frac{1}{3}\epsd^2\big(d(t)\big)^2$, so we can accordingly set $\rmaxb(t)=\big(d(t)\big)^{2}\cdot\frac{\epsd^{2}n_r}{6\pi'(s,t)}$: as the variance is allowed to be $\big(d(t)\big)^2$ times larger, we can set $\rmaxb(t)$ to be also $\big(d(t)\big)^2$ times larger so that the Backward Push process can be shallower.
At first glance, this simple modification alone would result in a desirable algorithm for the \SSPPRD query.
However, if we directly copy \autoref{alg:AbsPPR}'s Phase I, it would incur an $\tO(1/\epsd)$ complexity, which dominates the promised $\tO\Big(1/\epsd\cdot\sqrt{\sum_{t\in V}\pi(s,t)/d(t)}\Big)$ complexity.
To devise a more efficient Phase I for the purpose of obtaining rough PPR estimates and determining the candidate set, we need the following recently proposed technique, called \RBS.

\subparagraph*{Randomized Backward Search (\RBS)~\cite{wang2020personalized}.}
\RBS is proposed by Wang et al.~\cite{wang2020personalized} as a nearly optimal algorithm for the STPPR query with relative error guarantees.
In our approach, we will use this algorithm as a black box, denoted as a subroutine $\text{\RBS}\left(G,\alpha,t,\epsr,\delta,\pf\right)$.
This subroutine takes as input graph $G$, decay factor $\alpha$, target node $t$, relative error parameter $\epsr$, relative error threshold $\delta$, and failure probability bound $\pf$.
It outputs estimates $\pi'(v,t)$ for all $v\in V$.
\RBS guarantees that with probability at least $1-\pf$, the results satisfy $\big|\pi'(v,t)-\pi(v,t)\big|\le\epsr\cdot\pi(v,t)$ for all $v\in V$ with $\pi(v,t)\ge\delta$, and $\pi'(v,t)\le\pi(v,t)+\delta$ for $v$ with $\pi(v,t)<\delta$.
When $\epsr$ is constant (as is the case in our usage), \RBS is proved to run in $\tO\big(n\pi(t)/\delta\big)$ expected time, where $\pi(t)=\frac{1}{n}\sum_{v\in V}\pi(v,t)$ is the \textit{PageRank} centrality value of $t$~\cite{brin1998anatomy}.
\RBS requires $\Theta(m)$ time for preprocessing the graph, and we assume that this is done.
Although \RBS is originally tailored to the STPPR query, as we will see, it is helpful for the \SSPPRD query.

\begin{algorithm}[t]
    \DontPrintSemicolon
    \caption{Our algorithm for the \SSPPRD query} \label{alg:AbsPPR_SSPPRD}
    \KwIn{undirected graph $G$, decay factor $\a$, source node $s$, error parameter $\epsd$}
    \KwOut{estimates $\epi(s,t)$ for all $t\in V$}
    // Phase~\RomanNumeralCaps{1} \;
    $\pi'(v,s)\text{ for all }v\in V\gets\text{\RBS}\left(G,\alpha,s,\frac{1}{2},\frac{1}{4}\epsd\cdot d(s),1/n^2\right)$ \; \label{line:RBS}
    \For{\textup{each} $v\in V$ \textup{with nonzero} $\pi'(v,s)$ \label{line:AbsPPR_SSPPRD_phase1_loop}}
    {
        $\pi'(s,v)\gets\frac{\pi'(v,s)d(v)}{d(s)}$ \; \label{line:AbsPPR_SSPPRD_phase1_set}
    }
    $C\gets\left\{t\in V:\frac{\pi'(s,t)}{d(t)}>\frac{1}{2}\epsd\right\}$ \label{line:AbsPPR_SSPPRD_phase1_candidates} \;
    // Phase~\RomanNumeralCaps{2} \;
    identical to Phase~\RomanNumeralCaps{2} of \autoref{alg:AbsPPR}, except that $n_r$ is initialized to $\lceil n/\epsd\rceil$ and $\rmaxb(t)$ is initialized to $\big(d(t)\big)^{2}\cdot\frac{\epsd^{2}n_r}{6\pi'(s,t)}$ for all $t\in C$ (cf. \autoref{line:AbsPPR_phase2_begin} and \autoref{line:AbsPPR_phase2_set_rmaxb} in \autoref{alg:AbsPPR}) \; \label{line:AbsPPR_SSPPRD_phase2_initial}
    // Phase~\RomanNumeralCaps{3} \;
    identical to Phase~\RomanNumeralCaps{3} of \autoref{alg:AbsPPR} \;
    \Return $\epi(s,t)$ for all $t\in V$ \;
\end{algorithm}

\Autoref{alg:AbsPPR_SSPPRD} shows the pseudocode of our algorithm.
In Phase I, the algorithm invokes $\text{\RBS}\left(G,\alpha,t,\epsr,\delta,\pf\right)$ with $t=s$, $\epsr=\frac{1}{2}$, $\delta=\frac{1}{4}\epsd\cdot d(s)$, and $\pf=1/n^2$ (\autoref{line:RBS}).
Then, it converts the STPPR results to SSPPR results by setting $\pi'(s,v)$ to be $\frac{\pi'(v,s)d(v)}{d(s)}$ for $v$ with nonzero $\pi'(v,s)$ (\autoref{line:AbsPPR_SSPPRD_phase1_loop} to \autoref{line:AbsPPR_SSPPRD_phase1_set}).
Based on these estimates, the candidate set $C$ is computed as $\left\{t\in V:\frac{\pi'(s,t)}{d(t)}>\frac{1}{2}\epsd\right\}$ (\autoref{line:AbsPPR_SSPPRD_phase1_candidates}).
In Phase II, the initial settings are changed accordingly (\autoref{line:AbsPPR_SSPPRD_phase2_initial}), as we have discussed.
Other components of \autoref{alg:AbsPPR_SSPPRD} remain the same as \autoref{alg:AbsPPR}.

\subsection{Analyses for the \SSPPRD Query} \label{sec:analyses_SSPPRD}

For the \SSPPRD query, as the overall procedure of the algorithm remains the same, we can extend the results for the \SSPPRA query with ease.
First, we give an analogy to \autoref{lem:MC_SSPPRA}.

\begin{lemma} \label{lem:RBS_SSPPRD}
    Let $\pi'(s,v)$ denote the approximation for $\pi(s,v)$ obtained in Phase~\RomanNumeralCaps{1} of \autoref{alg:AbsPPR_SSPPRD}. With probability at least $1-1/n^2$, we have $\frac{1}{2}\pi(s,v)\le\pi'(s,v)\le\frac{3}{2}\pi(s,v)$ for all $v\in V$ with $\frac{\pi(s,v)}{d(v)}\ge\frac{1}{4}\epsd$, and $\pi'(s,v)\le\pi(s,v)+\frac{1}{4}\epsd\cdot d(v)$ for all $v\in V$ with $\frac{\pi(s,v)}{d(v)}<\frac{1}{4}\epsd$.
\end{lemma}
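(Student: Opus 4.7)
The plan is to invoke the stated guarantee of \RBS as a black box and then translate its per-node bounds on $\pi'(v,s)$ into the required bounds on $\pi'(s,v)$ via the undirected PPR symmetry $\pi(s,v)d(s)=\pi(v,s)d(v)$ (\autoref{thm:symmetry}). Since Phase~\RomanNumeralCaps{1} of \autoref{alg:AbsPPR_SSPPRD} calls $\text{\RBS}\bigl(G,\alpha,s,\tfrac{1}{2},\tfrac{1}{4}\epsd\cdot d(s),1/n^2\bigr)$ on \autoref{line:RBS}, the failure probability of $1/n^2$ already matches the one required by the lemma, so no amplification or union-bounding beyond what \RBS itself provides is needed.

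First, I would write out exactly what \RBS guarantees with these parameters: with probability at least $1-1/n^2$, for every $v\in V$ with $\pi(v,s)\ge\frac{1}{4}\epsd\cdot d(s)$ the output satisfies $\tfrac{1}{2}\pi(v,s)\le\pi'(v,s)\le\tfrac{3}{2}\pi(v,s)$, while for every $v$ with $\pi(v,s)<\frac{1}{4}\epsd\cdot d(s)$ it satisfies $\pi'(v,s)\le\pi(v,s)+\tfrac{1}{4}\epsd\cdot d(s)$. Next, I would apply \autoref{thm:symmetry} in the form $\pi(v,s)/d(s)=\pi(s,v)/d(v)$ to rewrite the case split: $\pi(v,s)\ge\tfrac{1}{4}\epsd\cdot d(s)$ is equivalent to $\pi(s,v)/d(v)\ge\tfrac{1}{4}\epsd$, matching the case split in the lemma statement. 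Multiplying both sides of the \RBS bounds by $d(v)/d(s)$ and using $\pi'(s,v)=\pi'(v,s)d(v)/d(s)$ from \autoref{line:AbsPPR_SSPPRD_phase1_set}, the relative error case immediately yields $\tfrac{1}{2}\pi(s,v)\le\pi'(s,v)\le\tfrac{3}{2}\pi(s,v)$, and the additive case yields $\pi'(s,v)\le\pi(s,v)+\tfrac{1}{4}\epsd\cdot d(v)$.

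The one subtlety I would address carefully is the behavior on nodes $v$ with $\pi'(v,s)=0$, since \autoref{line:AbsPPR_SSPPRD_phase1_loop} only converts nonzero entries, leaving $\pi'(s,v)$ at its default value $0$ otherwise. In the large-PPR case, $\pi(s,v)/d(v)\ge\tfrac{1}{4}\epsd$ forces $\pi(v,s)\ge\tfrac{1}{4}\epsd\cdot d(s)>0$, so the \RBS relative-error guarantee yields $\pi'(v,s)\ge\tfrac{1}{2}\pi(v,s)>0$ on the success event, meaning this node does enter the loop and gets assigned properly. In the small-PPR case, if $\pi'(v,s)=0$ then $\pi'(s,v)=0\le\pi(s,v)+\tfrac{1}{4}\epsd\cdot d(v)$ trivially, and if $\pi'(v,s)>0$ the additive bound translates directly as above.

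I do not expect any serious obstacle: the argument is essentially a two-line translation of an STPPR guarantee into an SSPPR guarantee via the undirected symmetry identity, plus one bookkeeping check for zero estimates. The mildly nontrivial conceptual step is recognizing that setting $\delta=\tfrac{1}{4}\epsd\cdot d(s)$ at the \RBS call is precisely what makes the two case-split thresholds align after applying $\pi(s,v)/d(v)=\pi(v,s)/d(s)$; this choice of $\delta$ is what makes the lemma work, and I would flag it briefly at the start of the proof.
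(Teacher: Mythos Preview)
Your proposal is correct and follows essentially the same route as the paper: invoke the \RBS black-box guarantee with the chosen parameters, use \autoref{thm:symmetry} to see that the case-split thresholds $\pi(v,s)\gtrless\tfrac{1}{4}\epsd\cdot d(s)$ and $\pi(s,v)/d(v)\gtrless\tfrac{1}{4}\epsd$ coincide, and multiply the resulting inequalities by $d(v)/d(s)$. Your additional treatment of nodes with $\pi'(v,s)=0$ is a small bookkeeping point the paper leaves implicit, but it does not change the argument.
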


\begin{proof}
    By the property of \RBS described above and our parameter settings in \autoref{line:RBS}, with probability at least $1-1/n^2$, the results returned by \RBS satisfy $\big|\pi'(v,s)-\pi(v,s)\big|\le\frac{1}{2}\cdot\pi(v,s)$ for all $v\in V$ with $\pi(v,s)\ge\frac{1}{4}\epsd\cdot d(s)$, and $\pi'(v,s)\le\pi(v,s)+\frac{1}{4}\epsd\cdot d(s)$ for $t$ with $\pi(v,s)<\frac{1}{4}\epsd\cdot d(s)$.
    Also, recall that $\pi(s,v)=\frac{\pi(v,s)d(v)}{d(s)}$ by \autoref{thm:symmetry}, and we set $\pi'(s,v)$ to be $\frac{\pi'(v,s)d(v)}{d(s)}$.
    Therefore, for $v$ with $\frac{\pi(s,v)}{d(v)}\ge\frac{1}{4}\epsd$, we have $\pi(v,s)=\frac{\pi(s,v)d(s)}{d(v)}\ge\frac{1}{4}\epsd\cdot d(s)$, so it is guaranteed that $\big|\pi'(v,s)-\pi(v,s)\big|\le\frac{1}{2}\cdot\pi(v,s)$.
    Multiplying both sides by $\frac{d(v)}{d(s)}$, we obtain $\big|\pi'(s,v)-\pi(s,v)\big|\le\frac{1}{2}\cdot\pi(s,v)$, proving the first part of the lemma.
    
    On the other hand, for $v$ with $\frac{\pi(s,v)}{d(v)}<\frac{1}{4}\epsd$, we have $\pi(v,s)<\frac{1}{4}\epsd\cdot d(s)$, implying $\pi'(v,s)\le\pi(v,s)+\frac{1}{4}\epsd\cdot d(s)$.
    Multiplying both sides by $\frac{d(v)}{d(s)}$ yields $\pi'(s,v)\le\pi(s,v)+\frac{1}{4}\epsd\cdot d(v)$, which completes the proof.
\end{proof}

\noindent Also, we prove the following lemma that is similar to \autoref{lem:phase1_SSPPRA}.

\begin{lemma} \label{lem:phase1_SSPPRD}
All the non-candidate nodes $t'\notin C$ satisfy $\frac{\pi(s,t')}{d(t')}\le\epsd$, and all the candidate nodes $t\in C$ satisfy $\frac{1}{2}\pi(s,t)\le\pi'(s,t)\le\frac{3}{2}\pi(s,t)$.
\end{lemma}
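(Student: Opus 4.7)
The plan is to mirror the argument used for \autoref{lem:phase1_SSPPRA}, conditioning throughout on the success of \RBS in Phase~\RomanNumeralCaps{1}, so that the two-case bound from \autoref{lem:RBS_SSPPRD} holds for every $v \in V$. Since the candidate set is now defined with respect to the degree-normalized quantity $\pi'(s,t)/d(t)$, I need to carry the factors of $d(t)$ through carefully, but the structure of the argument is identical: rule out the bad case by contradiction in each direction.

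For the first claim, I would fix a non-candidate node $t' \notin C$, so by definition $\pi'(s,t')/d(t') \le \frac{1}{2}\epsd$. I want to show $\pi(s,t')/d(t') \le \epsd$. The cleanest way is by contradiction: suppose $\pi(s,t')/d(t') > \epsd$. Then in particular $\pi(s,t')/d(t') \ge \frac{1}{4}\epsd$, so the first case of \autoref{lem:RBS_SSPPRD} applies and gives $\pi'(s,t') \ge \frac{1}{2}\pi(s,t')$. Dividing by $d(t')$ yields $\pi'(s,t')/d(t') \ge \frac{1}{2}\pi(s,t')/d(t') > \frac{1}{2}\epsd$, contradicting $t' \notin C$.

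For the second claim, I would fix a candidate node $t \in C$ and show that we are in the first case of \autoref{lem:RBS_SSPPRD}, namely $\pi(s,t)/d(t) \ge \frac{1}{4}\epsd$. Again by contradiction: if $\pi(s,t)/d(t) < \frac{1}{4}\epsd$, then the second case of \autoref{lem:RBS_SSPPRD} gives $\pi'(s,t) \le \pi(s,t) + \frac{1}{4}\epsd \cdot d(t) < \frac{1}{4}\epsd \cdot d(t) + \frac{1}{4}\epsd \cdot d(t) = \frac{1}{2}\epsd \cdot d(t)$, contradicting $\pi'(s,t)/d(t) > \frac{1}{2}\epsd$. Hence $\pi(s,t)/d(t) \ge \frac{1}{4}\epsd$, and the first case of \autoref{lem:RBS_SSPPRD} delivers the desired sandwich bound $\frac{1}{2}\pi(s,t) \le \pi'(s,t) \le \frac{3}{2}\pi(s,t)$.

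There is no real obstacle here; the only point that requires a moment's care is matching the threshold $\frac{1}{4}\epsd$ in \autoref{lem:RBS_SSPPRD} against the candidate cutoff $\frac{1}{2}\epsd$ so that the two constants combine to give $\frac{1}{2}\epsd \cdot d(t)$ in the second contradiction, and dually so that the factor $\frac{1}{2}$ in the relative error bound produces the required $\frac{1}{2}\epsd$ in the first contradiction. These constants are precisely the ones chosen in the setup of \autoref{line:RBS} and \autoref{line:AbsPPR_SSPPRD_phase1_candidates}, so the bookkeeping matches up cleanly.
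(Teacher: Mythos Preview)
Your proposal is correct and follows essentially the same approach as the paper's proof: both argue by contradiction in each direction, using the two cases of \autoref{lem:RBS_SSPPRD} against the candidate threshold $\frac{1}{2}\epsd$ to force the desired conclusions. The only cosmetic difference is that you phrase the first part as ``suppose $\pi(s,t')/d(t')>\epsd$'' and then note this exceeds $\frac{1}{4}\epsd$, whereas the paper directly invokes the relative-error case; the logic is identical.
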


\begin{proof}
By \autoref{lem:RBS_SSPPRD}, we have $\frac{1}{2}\pi(s,v)\le\pi'(s,v)\le\frac{3}{2}\pi(s,v)$ for $v$ with $\frac{\pi(s,v)}{d(v)}\ge\frac{1}{4}\epsd$, and $\pi'(s,v)\le\pi(s,v)+\frac{1}{4}\epsd\cdot d(v)$ for $v$ with $\frac{\pi(s,v)}{d(v)}<\frac{1}{4}\epsd$.
Also, recall that $C$ is set to be $\left\{v:\frac{\pi'(s,v)}{d(v)}>\frac{1}{2}\epsd\right\}$.
Therefore, for any non-candidate node $t'\notin C$, we have $\frac{\pi'(s,t')}{d(t')}\le\frac{1}{2}\epsd$, and thus we must have $\frac{\pi(s,t')}{d(t')}\le\epsd$: otherwise, $\frac{\pi(s,t')}{d(t')}>\epsd$, which implies $\pi'(s,t')\ge\frac{1}{2}\pi(s,t')>\frac{1}{2}\epsd\cdot d(t')$, contradicting $\frac{\pi'(s,t')}{d(t')}\le\frac{1}{2}\epsd$.
This proves the first part of the lemma.

For the second part, we note that if $\frac{\pi(s,t)}{d(t)}<\frac{1}{4}\epsd$, then $\pi'(s,t)\le\pi(s,t)+\frac{1}{4}\epsd\cdot d(t)<\frac{1}{2}\epsd\cdot d(t)$.
Therefore, $\pi'(s,t)>\frac{1}{2}\epsd\cdot d(t)$ (i.e., $t\in C$) inversely implies $\frac{\pi(s,t)}{d(t)}\ge\frac{1}{4}\epsd$, so $\frac{1}{2}\pi(s,t)\le\pi'(s,t)\le\frac{3}{2}\pi(s,t)$ is guaranteed for all candidate nodes $t\in C$.
\end{proof}

\paragraph*{Correctness Analysis}
Based on the lemmas above, we can prove that \autoref{lem:AbsPPR_unbias} and the first part of \autoref{lem:AbsPPR_variance} also hold for \autoref{alg:AbsPPR_SSPPRD}.
Thus, we have $\Var\big[\epi_i(s,t)\big]\le\frac{1}{n_r}\cdot\rmaxb(t)\pi(s,t)$ for any $t\in C$, where $\rmaxb(t)=\big(d(t)\big)^{2}\cdot\frac{\epsd^{2}n_r}{6\pi'(s,t)}$.
These results lead to the following bound on $\Var\big[\epi_i(s,t)\big]$:
\begin{align*}
    \Var\big[\epi_i(s,t)\big]&\le\frac{1}{n_r}\cdot\frac{\epsd^2\big(d(t)\big)^{2}n_r}{6\pi'(s,t)}\cdot\pi(s,t) \\
    &\le\frac{1}{n_r}\cdot\frac{\epsd^2\big(d(t)\big)^{2}n_r}{3\pi(s,t)}\cdot\pi(s,t)=\frac{1}{3}\epsd^2\big(d(t)\big)^2.
\end{align*}

\noindent
Now we are ready to prove the correctness of \autoref{alg:AbsPPR_SSPPRD}.

\begin{theorem} \label{thm:AbsPPRD_correct}
\Autoref{alg:AbsPPR_SSPPRD} answers the \SSPPRD query (defined in \autoref{def:SSPPRD}) correctly with probability at least $1-1/n$.
\end{theorem}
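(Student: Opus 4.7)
The plan is to mirror the proof of \autoref{thm:AbsPPR_correct} step by step, making two substitutions: swap the role of \autoref{lem:MC_SSPPRA} with \autoref{lem:RBS_SSPPRD}, and swap the role of \autoref{lem:phase1_SSPPRA} with \autoref{lem:phase1_SSPPRD}. The correctness argument should split on whether a node is a candidate, with the two cases handled separately and then combined via a union bound.

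First, I would dispose of the non-candidate nodes: by \autoref{lem:phase1_SSPPRD}, any $t'\notin C$ already satisfies $\pi(s,t')/d(t')\le\epsd$, so returning $\epi(s,t')=0$ trivially meets the degree-normalized absolute-error requirement of \autoref{def:SSPPRD}. The substantive work is for $t\in C$. Here I would invoke the analogue of \autoref{lem:AbsPPR_unbias} (its proof uses only the invariant of Backward Push and the unbiasedness of Monte Carlo, both of which carry over unchanged to \autoref{alg:AbsPPR_SSPPRD}) to conclude $\E[\epi_i(s,t)]=\pi(s,t)$. Combined with the variance bound $\Var[\epi_i(s,t)]\le\frac{1}{3}\epsd^2(d(t))^2$ derived immediately before the theorem statement, Chebyshev's inequality applied with deviation $\lambda=\epsd\cdot d(t)$ yields
\[
\Pr\Big[\big|\epi_i(s,t)-\pi(s,t)\big|\ge\epsd\cdot d(t)\Big]\le\frac{\Var[\epi_i(s,t)]}{\epsd^2(d(t))^2}\le\frac{1}{3}.
\]
Dividing the event inside the probability by $d(t)$ converts this into the degree-normalized form required by \autoref{def:SSPPRD}, for each single trial $i$.

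Next I would amplify this constant success probability to a high-probability guarantee per candidate via the median trick. Since \autoref{alg:AbsPPR_SSPPRD} inherits Phase~\RomanNumeralCaps{3} from \autoref{alg:AbsPPR} verbatim, it produces $n_t=\lceil 18\ln(2n^2)\rceil$ independent trials and outputs their median, so \autoref{thm:median_trick} (applied with $\pf=1/(2n^2)$) gives
\[
\Pr\Big[\big|\epi(s,t)-\pi(s,t)\big|\ge\epsd\cdot d(t)\Big]\le\frac{1}{2n^2}
\]
for each $t\in C$. A union bound over the at most $n$ candidates yields a failure probability of at most $1/(2n)$ conditional on the success of Phase~\RomanNumeralCaps{1}. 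Finally, the unconditional bound follows by combining this with the Phase~\RomanNumeralCaps{1} success probability $\ge 1-1/n^2$ from \autoref{lem:RBS_SSPPRD}, giving overall success probability at least $(1-1/n^2)(1-1/(2n))>1-1/n$.

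The only subtlety, and the step I would take most care with, is verifying that the ``analogue of \autoref{lem:AbsPPR_unbias}'' really does carry over without modification. The unbiasedness argument in \autoref{sec:deferred_proofs} depends solely on (i) the Backward Push invariant \autoref{eqn:BP_invariant} holding for the reserves and residues produced in Phase~\RomanNumeralCaps{2}, and (ii) the Monte Carlo estimators $\pipp(s,v)$ in Phase~\RomanNumeralCaps{3} being unbiased for $\pi(s,v)$. Both are unaffected by the changed settings of $n_r$ and $\rmaxb(t)$ in \autoref{line:AbsPPR_SSPPRD_phase2_initial}, so the unbiasedness transfers and the rest of the chain above goes through without obstruction.
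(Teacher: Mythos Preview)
Your proposal is correct and follows essentially the same approach as the paper's own proof: the paper handles non-candidate nodes via \autoref{lem:phase1_SSPPRD}, applies unbiasedness plus the variance bound $\Var[\epi_i(s,t)]\le\frac{1}{3}\epsd^2(d(t))^2$ through Chebyshev to get the $\frac{1}{3}$ per-trial failure probability, and then defers the remaining median-trick/union-bound/Phase~\RomanNumeralCaps{1}-success steps to ``arguments similar to the proof of \autoref{thm:AbsPPR_correct}.'' You have simply spelled those deferred steps out explicitly, and your extra paragraph verifying that \autoref{lem:AbsPPR_unbias} carries over is exactly the justification the paper gives in the sentence preceding the theorem.
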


\begin{proof}
For non-candidate nodes $t'\notin C$, \autoref{lem:phase1_SSPPRD} guarantees that $\frac{\pi(s,t')}{d(t')}\le\epsd$, so it is acceptable that the algorithm returns $\epi(s,t')=0$ as their estimates.
On the other hand, for candidate nodes $t\in C$, the unbiasedness and Chebyshev's inequality yield
\begin{align*}
    \Pr\Big[\big\lvert\epi_i(s,t)-\pi(s,t)\big\rvert\ge\epsd\cdot d(t)\Big]\le\frac{\Var\big[\epi_i(s,t)\big]}{\epsd^2\big(d(t)\big)^2}\le\frac{1}{3}.
\end{align*}
We can then prove the theorem using arguments similar to the proof of \autoref{thm:AbsPPR_correct}.
\end{proof}

\paragraph*{Complexity Analysis}
For \autoref{alg:AbsPPR_SSPPRD}, \autoref{claim:n_r} also holds.
As a result, we have the following theorem:

\begin{theorem} \label{thm:AbsPPR_complexity_SSPPRD}
The expected time complexity of \autoref{alg:AbsPPR_SSPPRD} on undirected graphs is
\[
\tO\left(\frac{1}{\epsd}\sqrt{\sum_{t\in V}\frac{\pi(s,t)}{d(t)}}\right).
\]
\end{theorem}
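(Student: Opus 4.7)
The plan is to mirror the proof of Theorem \ref{thm:AbsPPR_complexity_directed}, separately bounding the costs of Phases~\RomanNumeralCaps{1}, \RomanNumeralCaps{2}, and \RomanNumeralCaps{3}, then invoking Claim \ref{claim:n_r} to balance Phases~\RomanNumeralCaps{2} and \RomanNumeralCaps{3}. The two main differences from the \SSPPRA analysis are that (i) Phase~\RomanNumeralCaps{1} now uses \RBS instead of Monte Carlo, and (ii) $\rmaxb(t)$ in Phase~\RomanNumeralCaps{2} is enlarged by a factor of $\big(d(t)\big)^2$. The undirected symmetry relation (Theorem \ref{thm:symmetry}) $\pi(u,v)d(u)=\pi(v,u)d(v)$ will be the key tool for simplification throughout.

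First I would bound Phase~\RomanNumeralCaps{2}. Using the Backward Push complexity (\autoref{eqn:BP_complexity}) with $\rmaxb(t)=\big(d(t)\big)^{2}\cdot\frac{\epsd^{2}n_r}{6\pi'(s,t)}$ and $\pi'(s,t)\le\frac{3}{2}\pi(s,t)$ from \autoref{lem:phase1_SSPPRD}, the total cost of Adaptive Backward Push (for a fixed $n_r$) is bounded by
\[
O\left(\sum_{t\in C}\frac{\sum_{v\in V}\pi(v,t)d(v)}{\rmaxb(t)}\right)=O\left(\frac{1}{\epsd^{2}n_r}\sum_{t\in V}\frac{\pi(s,t)}{\big(d(t)\big)^2}\sum_{v\in V}\pi(v,t)d(v)\right).
\]
Applying Theorem \ref{thm:symmetry} inside the inner sum gives $\sum_{v\in V}\pi(v,t)d(v)=d(t)\sum_{v\in V}\pi(t,v)=d(t)$, which collapses the Phase~\RomanNumeralCaps{2} bound to $O\big(\frac{1}{\epsd^{2}n_r}\sum_{t\in V}\pi(s,t)/d(t)\big)$. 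The cost of Phase~\RomanNumeralCaps{3} is $\tO(n_r)$ as before.

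Next I would balance these two phases via Claim \ref{claim:n_r} and AM--GM: the optimal $n_r$ minimizes the sum to
\[
\tO\left(\sqrt{n_r\cdot\frac{1}{\epsd^{2}n_r}\sum_{t\in V}\frac{\pi(s,t)}{d(t)}}\right)=\tO\left(\frac{1}{\epsd}\sqrt{\sum_{t\in V}\frac{\pi(s,t)}{d(t)}}\right),
\]
which is the bound claimed. Finally, I would check that Phase~\RomanNumeralCaps{1} is not dominant. By the \RBS guarantee, its expected cost is $\tO\big(n\pi(s)/\delta\big)=\tO\big(\sum_{v\in V}\pi(v,s)/(\epsd\,d(s))\big)$, and applying symmetry once more yields $\sum_{v\in V}\pi(v,s)=d(s)\sum_{v\in V}\pi(s,v)/d(v)$, so Phase~\RomanNumeralCaps{1} runs in $\tO\big(\frac{1}{\epsd}\sum_{v\in V}\pi(s,v)/d(v)\big)$ expected time. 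Since $\sum_{v\in V}\pi(s,v)/d(v)\le 1$, this quantity is at most its own square root, so Phase~\RomanNumeralCaps{1} is absorbed into the stated bound.

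The main obstacle, as in the \SSPPRA analysis, is carrying out the symmetry manipulations so that the inflated $\big(d(t)\big)^2$ factor in $\rmaxb(t)$ cancels with the $d(t)$ emerging from $\sum_v\pi(v,t)d(v)$, leaving exactly $\pi(s,t)/d(t)$ inside the sum; a minor but easy-to-overlook point is to verify that the \RBS call in Phase~\RomanNumeralCaps{1} does not dominate the balanced Phase~\RomanNumeralCaps{2}/\RomanNumeralCaps{3} cost, which follows from the elementary inequality $x\le\sqrt{x}$ for $x\in[0,1]$. A separate technicality, handled by the truncation device described in the proof of Theorem \ref{thm:AbsPPR_complexity_directed}, is that these bounds are conditioned on the success of Phase~\RomanNumeralCaps{1} (here by \autoref{lem:RBS_SSPPRD}); the same fallback to \naive Power Method whenever the running cost reaches $\Theta(n^2)$ keeps the overall expected complexity unchanged.
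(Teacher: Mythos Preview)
Your proposal is correct and follows essentially the same approach as the paper: bound Phase~\RomanNumeralCaps{2} via \autoref{eqn:BP_complexity} with the new $\rmaxb(t)$, collapse $\sum_{v}\pi(v,t)d(v)$ to $d(t)$ using the symmetry of \autoref{thm:symmetry}, balance with Phase~\RomanNumeralCaps{3} via \autoref{claim:n_r}, and rewrite the \RBS cost of Phase~\RomanNumeralCaps{1} as $\tO\big(\tfrac{1}{\epsd}\sum_{v}\pi(s,v)/d(v)\big)$ to see it is dominated. Your explicit justification that Phase~\RomanNumeralCaps{1} is absorbed (via $x\le\sqrt{x}$ for $x\le 1$) and your remark on the Power-Method fallback are slightly more detailed than the paper's terse ``negligible,'' but the argument is the same.
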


\begin{proof}
First, using $n\pi(s)=\sum_{v\in V}\pi(v,s)$ and \autoref{thm:symmetry}, we can write the complexity of running \RBS in Phase~\RomanNumeralCaps{1} as
\begin{align}
    \tO\left(\frac{n\pi(s)}{\epsd\cdot d(s)}\right)=\tO\left(\frac{\sum_{v\in V}\pi(v,s)}{\epsd\cdot d(s)}\right)=\tO\left(\frac{1}{\epsd}\sum_{v\in V}\frac{\pi(s,v)}{d(v)}\right), \label{eqn:npi(s)/d(s)}
\end{align}
which is negligible compared to the overall complexity.
Next, by modifying \autoref{eqn:complexity_phase2}, we immediately obtain the complexity of Backward Push in Phase~\RomanNumeralCaps{2}, which is:
\begin{align*}
    &O\left(\frac{1}{\epsd^{2}n_r}\sum_{t\in V}\frac{\pi(s,t)}{\big(d(t)\big)^2}\sum_{v\in V}\pi(v,t)d(v)\right) \\
    =&O\left(\frac{1}{\epsd^{2}n_r}\sum_{t\in V}\frac{\pi(s,t)}{\big(d(t)\big)^2}\sum_{v\in V}\pi(t,v)d(t)\right)=O\left(\frac{1}{\epsd^{2}n_r}\sum_{t\in V}\frac{\pi(s,t)}{d(t)}\right).
\end{align*}
As Phase~\RomanNumeralCaps{3} takes $\tO(n_r)$ time, by a similar argument as in the proof of \autoref{thm:AbsPPR_complexity_directed}, the overall complexity of \autoref{alg:AbsPPR_SSPPRD} is
\begin{align*}
    \tO\left(\sqrt{n_r\cdot\frac{1}{\epsd^{2}n_r}\sum_{t\in V}\frac{\pi(s,t)}{d(t)}}\right)=\tO\left(\frac{1}{\epsd}\sqrt{\sum_{t\in V}\frac{\pi(s,t)}{d(t)}}\right).
\end{align*}
\end{proof}

\noindent Finally, as promised in \autoref{sec:results}, we establish the bounds for the \SSPPRD query under the special setting that each $s\in V$ is chosen as the source node with probability $d(s)/(2m)$.

\begin{theorem}
    When each $s\in V$ is chosen as the source node with probability $d(s)/(2m)$, the lower bound for answering the \SSPPRD query becomes $\Omega(1/\epsd\cdot n/m)$, and the complexity of \autoref{alg:AbsPPR_SSPPRD} becomes $\tO\left(1/\epsd\cdot\sqrt{n/m}\right)$.
\end{theorem}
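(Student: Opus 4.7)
The plan is to handle the lower bound and the upper bound separately, both by computing (or bounding) the expectation of the per-source quantity $\sum_{t\in V}\pi(s,t)/d(t)$ when $s$ is drawn with probability $d(s)/(2m)$. The main identity that drives both parts is the symmetry theorem (\autoref{thm:symmetry}), which gives $\pi(s,t)/d(t)=\pi(t,s)/d(s)$ on undirected graphs.

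For the lower bound, I would start from the per-source lower bound $\Omega(1/\epsd\cdot\sum_{t\in V}\pi(s,t)/d(t))$ stated earlier in \autoref{sec:prior_bounds} and compute its average under the distribution $\Pr[s]=d(s)/(2m)$. Applying the symmetry identity term-by-term and then swapping the order of summation,
\[
\sum_{s\in V}\frac{d(s)}{2m}\sum_{t\in V}\frac{\pi(s,t)}{d(t)}=\sum_{s\in V}\frac{d(s)}{2m}\sum_{t\in V}\frac{\pi(t,s)}{d(s)}=\frac{1}{2m}\sum_{t\in V}\sum_{s\in V}\pi(t,s)=\frac{n}{2m},
\]
where the last equality uses $\sum_{s\in V}\pi(t,s)=1$ for every $t\in V$. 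Thus the average lower bound is $\Omega(1/\epsd\cdot n/m)$, matching the claimed bound.

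For the upper bound, I would invoke \autoref{thm:AbsPPR_complexity_SSPPRD}, which states the per-source expected complexity is $\tO(1/\epsd\cdot\sqrt{\sum_{t\in V}\pi(s,t)/d(t)})$, and then take the outer expectation over the random choice of $s$. Since $\sqrt{\cdot}$ is concave, Jensen's inequality yields
\[
\operatorname{\E}_{s}\!\left[\sqrt{\sum_{t\in V}\tfrac{\pi(s,t)}{d(t)}}\right]\le\sqrt{\operatorname{\E}_{s}\!\left[\sum_{t\in V}\tfrac{\pi(s,t)}{d(t)}\right]}=\sqrt{\tfrac{n}{2m}},
\]
using exactly the identity computed above. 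Combining this with the $1/\epsd$ factor gives the advertised $\tO(1/\epsd\cdot\sqrt{n/m})$ average complexity.

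I do not anticipate a serious obstacle: the whole argument is essentially the observation that $d(s)/(2m)$ is the stationary distribution of the simple random walk on the undirected graph, so averaging $\pi(s,t)/d(t)=\pi(t,s)/d(s)$ against it collapses cleanly via symmetry. The only minor care points are (i) justifying that Jensen's inequality may be applied to the $\tO(\cdot)$ bound (the hidden $\polylog(n)$ factor is deterministic in $n$ and pulls out of the expectation) and (ii) confirming that the $\tO(1/\epsd\cdot\sum_{v\in V}\pi(s,v)/d(v))$ contribution from \RBS in Phase~\RomanNumeralCaps{1} (see \autoref{eqn:npi(s)/d(s)}) is also dominated on average, which follows because its average is $\tO(1/\epsd\cdot n/m)=\tO(1/\epsd\cdot\sqrt{n/m}\cdot\sqrt{n/m})$ and $\sqrt{n/m}\le 1$ since $m\ge n$ on any graph with no isolated nodes.
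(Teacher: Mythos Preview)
Your proposal is correct and follows essentially the same approach as the paper: both compute the expectation of $\sum_{t}\pi(s,t)/d(t)$ via the symmetry relation (the paper writes it as $n\pi(s)/d(s)$ using PageRank notation) to get the lower bound, and then bound the expectation of its square root for the upper bound. The only cosmetic difference is that the paper applies Cauchy--Schwarz to $\sum_{s}\sqrt{\pi(s)d(s)}$, whereas you invoke Jensen's inequality for $\sqrt{\cdot}$; with probability weights $d(s)/(2m)$ these are the same inequality, so the arguments coincide.
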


\begin{proof}
Recall that for a given $s$, the lower bound is $\Omega\left(1/\epsd\cdot\sum_{t\in V}\pi(s,t)/d(t)\right)$ and the complexity of \autoref{alg:AbsPPR_SSPPRD} is $\tO\left(1/\epsd\cdot\sqrt{\sum_{t\in V}\pi(s,t)/d(t)}\right)$.
Thus, the bounds in question can be expressed as
\[
    \sum_{s\in V}\frac{d(s)}{2m}\cdot\Omega\left(\dfrac{1}{\epsd}\sum\limits_{t\in V}\dfrac{\pi(s,t)}{d(t)}\right)=\Omega\left(\sum_{s\in V}\frac{d(s)}{m}\cdot\frac{1}{\epsd}\cdot\frac{n\pi(s)}{d(s)}\right)
\]
and
\[
    \sum_{s\in V}\frac{d(s)}{2m}\cdot\tO\left(\dfrac{1}{\epsd}\sqrt{\sum\limits_{t\in V}\dfrac{\pi(s,t)}{d(t)}}\right)=\tO\left(\sum_{s\in V}\frac{d(s)}{m}\cdot\frac{1}{\epsd}\sqrt{\frac{n\pi(s)}{d(s)}}\right),
\]
respectively, since $\sum_{t\in V}\pi(s,t)\big/d(t)=n\pi(s)\big/d(s)$ for any $s\in V$ (see \autoref{eqn:npi(s)/d(s)}).
Since $\sum_{s\in V}\pi(s)=1$, it immediately follows that the first bound equals $\Omega(1/\epsd\cdot n/m)$.
On the other hand, the second bound can be simplified to be $\tO\left(1/\epsd\cdot\sqrt{n}/m\cdot\sum_{s\in V}\sqrt{\pi(s)d(s)}\right)$.
Applying Cauchy–Schwarz inequality yields $\sum_{s\in V}\sqrt{\pi(s)d(s)}\le\sqrt{\left(\sum_{s\in V}\pi(s)\right)\left(\sum_{s\in V}d(s)\right)}=\sqrt{2m}$, which leads to
\[
\tO\left(\frac{1}{\epsd}\cdot\frac{\sqrt{n}}{m}\sum_{s\in V}\sqrt{\pi(s)d(s)}\right)\le\tO\left(\frac{1}{\epsd}\cdot\frac{\sqrt{n}}{m}\cdot\sqrt{2m}\right)=\tO\left(\frac{1}{\epsd}\sqrt{\frac{n}{m}}\right).
\]
This finishes the proof.
\end{proof}

\end{document}